\newcolumntype{g}{>{\columncolor[gray]{0.85}}c}
\newcommand{\BV}{\mathbb{V}}
\newcommand{\BE}{\mathbb{E}}
\newcommand{\BG}{\mathbb{G}}
\newcommand{\FV}{\mathsf{V}}
\newcommand{\FE}{\mathsf{E}}
\newcommand{\IPSLP}{IRPS-LP}
\newcommand{\R}{\mathbb{R}}
\newcommand{\la}{\langle}
\newcommand{\ra}{\rangle}
\newcommand{\T}{\top}
\newcommand{\dcup}{\dot{\cup}}
\DeclareMathOperator*{\conv}{conv}
\def\clap#1{\hbox to 0pt{\hss#1\hss}}
\def\mathclap{\mathpalette\mathclapinternal}
\def\mathclapinternal#1#2{%
\clap{$\mathsurround=0pt#1{#2}$}}
\providecommand{\abs}[1]{\lvert#1\rvert}
\definecolor{fettrot}{RGB}{255,10,10}
\newtheorem{definition}{Definition}
\newtheorem{proposition}{Proposition}
	\definecolor{mycolor1}{HTML}{009900}
    \definecolor{mycolor2}{HTML}{FF0000}
    \definecolor{green}{HTML}{00FF00}
    \definecolor{red}{HTML}{FF0000}
  \tikzstyle{cut-edge}=[red]
    \tikzstyle{vertex}=[circle, draw, fill=white, inner sep=0pt, minimum width=1ex]
    \tikzset{every picture/.append style={baseline,scale=1.1}}
\newcommand{\bjoern}[2]{#2}
\begin{document}

\title{A Message Passing Algorithm for the Minimum Cost Multicut Problem}
\author{Paul Swoboda and Bj\"orn Andres}
  
\maketitle

\begin{abstract}
We propose a dual decomposition and linear program relaxation of the \textsc{np}-hard minimum cost multicut problem. Unlike other polyhedral relaxations of the multicut polytope, it is amenable to efficient optimization by message passing. Like other polyhedral relaxations, it can be tightened efficiently by cutting planes. We define an algorithm that alternates between message passing and efficient separation of cycle- and odd-wheel inequalities. This algorithm is more efficient than state-of-the-art algorithms based on linear programming, including algorithms written in the framework of leading commercial software, as we show in experiments with large instances of the problem from applications in computer vision, biomedical image analysis and data mining.
\end{abstract}

%\textcolor{red}{\textbf{ToDo:}}
%\begin{itemize}
%   \item Explain anytime performance.
%   \item Better introduction of applications.
%\end{itemize}

\section{Introduction}
\bjoern{Segmentation of data}{Decomposing a graph} into meaningful \bjoern{subsets}{clusters} is a fundamental primitive in computer vision, \bjoern{neuroscience}{biomedical image analysis} and data mining.
\bjoern{In the unsupervised setting we have no information which the classes are we shall assign the data to, the only information being similarity/dissimilarity of a subset of object pairs.
A natural formulation of this task is the multicut problem~\cite{ChopraMulticut}, also known as correlation clustering.
Individual objects are nodes of a graph, with edges being given cost according to similarity/dissimilarity.
A multicut is a partitioning of the graph into an arbitrary number of components.
The cost of a multicut is the sum of weights over all edges with endpoints in different components.}
{In settings where no information is given about the number or size of clusters, and information is given only about the pairwise similarity or dissimilarity of nodes, a canonical mathematical abstraction is the minimum cost multicut (or correlation clustering) problem~\cite{ChopraMulticut}.
The feasible solutions of this problem, multicuts, relate one-to-one to the decompositions of the graph.
A multicut is the set of edges that straddle distinct clusters.
The cost of a multicut is the sum of costs attributed to its edges.}

In the field of computer vision, the minimum cost multicut problem has been applied in \cite{BreakAndConquerAlushGoldberger,ImageSegmentationClosednessAndres,PlanarCorrelationClusteringYarkony,SegmentingNonPlanarAffinityAndresYarkony} \bjoern{for}{to the task of} unsupervised image segmentation \bjoern{on}{defined by} the \bjoern{Berkeley Segmentation Dataset}{BSDS data sets and benchmarks}~\cite{BerkeleySegmentationDataset} \bjoern{and state-of-the-art results have been obtained}.
In \bjoern{computational neuroscience}{the field of biomedical image analysis}, the \bjoern{}{minimum cost} multicut problem has been applied \bjoern{for}{to an image segmentation task for} connectomics~\cite{ClosedSurfaceSegmentationConnectomics}.
In \bjoern{}{the field of} data mining, applications include~\cite{DeduplicationMulticut,ClusteringQueryRefinement,ClusteringSparseGraphs,CorrelationClusteringMapReduce}.
\bjoern{Unfortunately, the multicut problem is NP-hard to solve.
Additionally, especially in connectomics the need to solve very large schale problems with millions of edges accurately has posed considerable challenges for existing solver.}
{As the minimum cost multicut problem is \textsc{np}-hard~\cite{Bansal2004,Demaine2006}, even for planar graphs~\cite{OptimalCoalitionStructureGeneration} large and complex instances with millions of edges, especially those for connectomics, pose a challenge for existing algorithms.}

\textbf{Related Work.}
Due to the importance of multicuts for applications, many algorithms for the minimum cost multicut problem have been proposed.
They are grouped below into three categories: primal feasible local search algorithms, linear programming algorithms and fusion algorithms.

\emph{Primal feasible local search algorithms}~\cite{MLforCoreferenceResolution,ImprovingMLforCoreferenceResolution,ClusteringAggregation,ConversationDisentanglementCC,BoundingAndComparingCorrelationClustering} 
attempt to improve an initial feasible solution by means of local transformations from a set that can be indexed or searched efficiently.
Local search algorithms are practical for large instances, as the cost of all operations is small compared to the cost of solving the entire problem at once.
On the downside, the feasible solution that is output typically depends on the initialization. 
And even if a solution is found, optimality is not certified, as no lower bound is computed.
% This makes them inapplicable for usage in Branch-and-Bound.

\emph{Linear programming algorithms}~\cite{KappesMulticut,HigherOrderSegmentationByMulticuts,SungwoongHigherOrderCorrelationClustering,NowozinMulticut,MulticutDDYarkony} operate on an outer polyhedral relaxation of the feasible set.
Their output is independent of their initialization and provides a lower bound.
This lower bound can be used directly inside a branch-and-bound search for certified optimal solutions.
Alternatively, the LP relaxation can be tightened by cutting planes. 
Several classes of planes are known that define a facet of the multicut polytope and can be separated efficiently \cite{ChopraMulticut}.
On the downside, algorithms for general LPs that are agnostic to the structure of the multicut problem scale super-linearly with the size of the instance.
% If LP algorithms are applicable to an instance, they often find and certify optimal solutions or feasible solutions superior to those found by local search.

\emph{Fusion algorithms} attempt to combine feasible solutions of subproblems obtained by combinatorial or random procedures into successively better multicuts.
The fusion process can either rely on column generation~\cite{PlanarCorrelationClusteringYarkony},
binary quadratic programming~\cite{CGC} or any algorithm for solving integer LPs~\cite{FusionMoveCC}.
In particular, \cite{PlanarCorrelationClusteringYarkony} provides dual lower bounds but is restricted to planar graphs.
\cite{CGC,FusionMoveCC} explore the primal solution space in a clever way, but do not output dual information.

\textbf{Outline.}
Below, a discussion of preliminaries (Sec.~\ref{section:preliminaries}) is followed by the definition of our proposed decomposition (Sec.~\ref{sec:DD}) and algorithm (Sec.~\ref{sec:Algorithm}) for the minimum cost multicut problem.
Our approach combines the efficiency of local search with the lower bounds of LPs and the subproblems of fusion, as we show in experiments with large and diverse instances of the problem (Sec.~\ref{sec:Experiments}).
All code and data will be made publicly available upon acceptance of the paper.

\section{Preliminaries}
\label{section:preliminaries}

\subsection{Minimum Cost Multicut Problem}

A \emph{decomposition (or clustering)} of a graph $G = (V,E)$ is a partition $V_1 \dcup \ldots \cup V_k$ of the node set $V$ such that $V_i \cap V_j = \varnothing$ $\forall i\neq j$ and every cluster $V_i$, $i=1,\ldots,k$ is connected.
The \emph{multicut} induced by a decomposition is the subset of those edges that straddle distinct clusters (cf.~Fig.~\ref{fig:Multicut}).
Such edges are said to be \emph{cut}.
Every multicut induced by any decomposition of $G$ is called a multicut of $G$.
We denote by $\mathcal{M}_G$ the set of all multicuts of $G$.

Given, for every edge $e \in E$, a cost $c_e \in \mathbb{R}$ of this edge being cut, the instance of the \emph{minimum cost multicut problem} w.r.t.~these costs is the optimization problem \eqref{eq:MulticutSetFormulation} whose feasible solutions are all multicuts of $G$.
For any edge $\{v,w\} = e \in E$, negative costs $\theta_e < 0$ favour the nodes $v$ and $w$ to be in distinct components.
Positive costs $\theta_e > 0$ favour these nodes to lie in the same component.
\begin{equation}
\min_{M \in \mathcal{M}_G} \sum_{e \in M} \theta_e
    \label{eq:MulticutSetFormulation}
\end{equation}

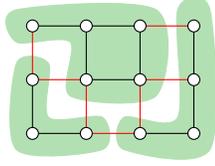
\begin{figure}
   \centering
   \begin{tikzpicture}[scale=0.65]
% component 0
    \draw[draw=mycolor1!30, fill=mycolor1!30] plot[smooth cycle, tension=0.5] coordinates
        {(-0.3, 2.3) (2.3, 2.3) (2.3, 0.7) (0.7, 0.7) (0.7, 1.7) (-0.3, 1.7)};
    % component 1
    \draw[draw=mycolor1!30, fill=mycolor1!30] plot[smooth cycle, tension=0.5] coordinates
        {(-0.3, -0.3) (-0.3, 1.3) (0.3, 1.3) (0.3, 0.3) (1.3, 0.3) (1.3, -0.3)};
    % component 2
    \draw[draw=mycolor1!30, fill=mycolor1!30] plot[smooth cycle, tension=0.5] coordinates
        {(1.7, -0.3) (1.7, 0.3) (2.7, 0.3) (2.7, 2.3) (3.3, 2.3) (3.3, -0.3)};
	% edges (not cut)
	\draw (0, 0) -- (0, 1);
	\draw (0, 0) -- (1, 0);
	\draw (0, 2) -- (1, 2);
	\draw (1, 1) -- (1, 2);
	\draw (1, 1) -- (2, 1);
	\draw (1, 2) -- (2, 2);
	\draw (2, 1) -- (2, 2);
	\draw (2, 0) -- (3, 0);
	\draw (3, 0) -- (3, 1);
	\draw (3, 1) -- (3, 2);
	% cut
	\draw[style=cut-edge] (0, 1) -- (0, 2);
	\draw[style=cut-edge] (0, 1) -- (1, 1);
	\draw[style=cut-edge] (1, 0) -- (1, 1);
	\draw[style=cut-edge] (1, 0) -- (2, 0);
	\draw[style=cut-edge] (2, 0) -- (2, 1);
	\draw[style=cut-edge] (2, 1) -- (3, 1);
	\draw[style=cut-edge] (2, 2) -- (3, 2);
	% nodes, component 0
	\node[style=vertex] at (0, 0) {};
	\node[style=vertex] at (1, 0) {};
	\node[style=vertex] at (0, 1) {};
	% nodes, component 1
	\node[style=vertex] at (0, 2) {};
	\node[style=vertex] at (1, 1) {};
	\node[style=vertex] at (1, 2) {};
	\node[style=vertex] at (2, 1) {};
	\node[style=vertex] at (2, 2) {};
	% nodes, component 2
	\node[style=vertex] at (2, 0) {};
	\node[style=vertex] at (3, 0) {};
	\node[style=vertex] at (3, 1) {};
	\node[style=vertex] at (3, 2) {};
   \end{tikzpicture}
   \caption{Depicted above is a decomposition of a graph into three components (\textcolor{mycolor1}{green}). 
   The multicut induced by this decomposition consists of the edges that straddle distinct components (\textcolor{red}{red}).}
   \label{fig:Multicut}
   \end{figure}

%\paragraph{Lifted Multicut.}
%In addition to a given multicut problem $(\FV,\FE,\theta)$, we are given lifted edges $\FE'$ with $\FE \cap \FE' = \varnothing$ and lifted costs $\theta_e \in \R$ for $e \in \FE'$.
%The space of lifted multicuts is $\{M : M \text{ a multicut } \ \&\  (uv) \in M \cap \FE' \Rightarrow \exists p \in Path_{\FG}(u,v) \cap M \}$.
%In words this means that every lifted multicut is 
%(i) also a multicut in $(\FV, \FE \dcup \FE')$ and 
%(ii) whenever two nodes are in the same component, there must be a path in $\FE$ connecting them.
%The set of multicuts in $(\FV, \FE \dcup \FE')$ is in general strictly larger than the set of multicuts on $(\FV,\FE,\FE')$, see Figure~\ref{fig:LiftedMulticut}.

This problem is \textsc{np}-hard~\cite{Bansal2004,Demaine2006}, even for planar graphs~\cite{OptimalCoalitionStructureGeneration}.
Below, we recapitulate its formulation as a binary LP and then turn to LP relaxations:
For any 01-labeling $x \in \{0,1\}^E$ of the edges of $G$, the subset $x^{-1}(1)$ of those edges labeled 1 is a multicut of $G$ if and only if $x$ satisfies the system \eqref{eq:mp-cycles} of \emph{cycle inequalities} \cite{ChopraMulticut}.
Hence, \eqref{eq:MulticutSetFormulation} can be stated equivalently in the form of the binary LP \eqref{eq:mp-objective}--\eqref{eq:mp-integrality}.
\begin{align}
\min_{x \in \mathbb{R}^E} \quad
    & \sum_{e \in E} \theta_e x_e 
        \label{eq:mp-objective} \\
\textnormal{subject to} \quad
    & \forall C \in \textnormal{cycles}(G):
        \quad x_e \leq \sum_{\mathclap{e' \in C \setminus \{e\}}} x_{e'}
        \label{eq:mp-cycles} \\
    & x \in \{0,1\}^E 
        \label{eq:mp-integrality}
\end{align}
An LP relaxation is obtained by replacing the integrality constraints \eqref{eq:mp-integrality} by $x \in P$ with $P \subseteq [0,1]^E$.
This results in an outer relaxation of the \emph{multicut polytope}, which is the convex hull of the characteristic functions of all multicuts of $G$.
The LP relaxation obtained for $P := [0,1]^E$, i.e., with only the cycle inequalities, will not in general be tight.

A tighter LP relaxation is obtained by enforing also the \emph{odd wheel inequalities} \cite{ChopraMulticut}.
A \emph{$k$-wheel} is a cycle in $G$ with $k$ nodes all of which are connected to an additional node $u \in V$ that is not in the cycle and is called the \emph{center} of the $k$-wheel (cf.~Fig.~\ref{fig:OddWheel}).
For any odd number $k \in \mathbb{N}$, any $k$-wheel of $G$, the cycle $C = (v_1 v_2, \ldots, v_k v_1)$ and the center $u$ of the $k$-wheel, every characteristic function $x \in \{0,1\}^E$ of a multicut $x^{-1}(1)$ of $G$ satisfies the odd wheel inequality
\vspace{-1ex} % ??? micro-typesetting
\begin{align}
\sum_{i=1}^k x_{v_i v_{i+1}} - \sum_{i=1}^k x_{u v_i} \leq \left\lfloor \tfrac{k}{2} \right\rfloor
\quad\textnormal{with}\quad
v_{k+1} := v_1
\enspace .
\label{eq:OddWheelConstraint}
\end{align}

For completeness, we note that other inqualities known to further tighten the LP relaxation can be included in our algorithm, e.g., the \emph{bicycle inequalities}~\cite{ChopraMulticut} defind on graphs as in Fig.~\ref{fig:BicycleWheel}.
We, however, do not consider inequalities other than cycles and odd wheels in the algorithm we propose.

\tikzstyle{every node}=[circle,draw,minimum width=3ex,inner sep=0ex]
\begin{figure}
\small
\begin{minipage}{0.24\textwidth}
\centering
\begin{tikzpicture}[scale=0.6]
\draw (0,0) node (u)  {$u$};
\draw \foreach \x in {1,2,...,5} 
{
(\x*72+18:1.5) node (v\x) {$v_{\x}$}
(v\x) -- (u)
};
\draw (v1) -- (v2) -- (v3) -- (v4) -- (v5) -- (v1);
\end{tikzpicture}\\[1ex]
\caption{Odd Wheel}
\label{fig:OddWheel}
\end{minipage}%
\begin{minipage}{0.24\textwidth}
\centering
\begin{tikzpicture}[scale=0.6]
\draw (-0.4, -0.4) node (u1)  {$u_1$};
\draw (0.4, 0.4) node (u2)  {$u_2$};
\draw (u1) -- (u2);
\draw \foreach \x in {1,2,...,5} 
{
(\x*72+18:1.5) node (v\x) {$v_{\x}$} 
};
\draw (v1) -- (v2) -- (v3) -- (v4) -- (v5) -- (v1);

\draw (v1) -- (u1);
\draw (v2) -- (u1);
\draw (v3) -- (u1);
\draw (v4) -- (u1);
\draw (v5) -- (u1);

\draw (v1) -- (u2);
\draw (v2) -- (u2);
\draw (u2) to[bend left=30] (v3);
\draw (v4) -- (u2);
\draw (v5) -- (u2);
\end{tikzpicture}\\[1ex]
\caption{Odd Bicycle Wheel}
\label{fig:BicycleWheel}
\end{minipage}
\end{figure}

\subsection{Integer relaxed pairwise separable LPs}

LP relaxations of the multicut problem can in principle be solved with algorithms for general LPs which are available in excellent software such as CPlex~\cite{cplex} and Gurobi~\cite{gurobi}.
However, these algorithms scale super-linearly with the size of the problem and are hence impractical for large instances.

We define in Sec.~\ref{sec:DD} an LP relaxation of the multicut problem in form of an \IPSLP\ (Def.~\ref{def:\IPSLP}).
\IPSLP s are a special case of dual decomposition \cite{guignard1987lagrangean-TandA}.
In Def.~\ref{def:\IPSLP}, every $i \in \BV$ defines a subproblem, and every edge $ij \in \BE$ defines a dependency of subproblems.
Def.~\ref{def:\IPSLP} is more specific in that, firstly, the subproblems are binary and, secondly, the linear constraints \eqref{eq:IPSLP-constraints} that describe the dependence of subproblems are defined by 01-matrices that map 01-vectors to 01-vectors.
\IPSLP s are amenable to efficient optimization by message passing in the framework of \cite{ConvergentMessagePassingNIPS}.

\begin{definition}[IRPS-LP \cite{ConvergentMessagePassingNIPS}]
\label{def:\IPSLP}
%\cite{ConvergentMessagePassingNIPS}
Let $N \in \mathbb{N}$ and let $\BG = (\BV, \BE)$ be a graph with $\BV = \{1, \ldots, N\}$.
For every $j \in \BV$, let $d_j \in \mathbb{N}$, let $X_j \subseteq \{0,1\}^{d_j}$, 
 % let $\Xi_j := \conv X_j$ 
  and let $\theta_j \in \mathbb{R}^{d_j}$.
%Let $\Lambda := \Xi_1 \times \cdots \times \Xi_N$. 
  Let $\Lambda := \conv (X_1) \times \cdots \times \conv (X_N)$. 
For every $\{j,k\} = e \in \BE$, let $m_e \in \mathbb{N}$,
$A_{(j,k)} \in \{0,1\}^{m_e \times d_j}$ and $A_{(k,j)} \in \{0,1\}^{m_e \times d_k}$ such that
\begin{align}
\forall x \in X_j: \quad A_{(j,k)} x \in \{0,1\}^{m_e} \\
\forall x \in X_k: \quad A_{(k,j)} x \in \{0,1\}^{m_e}
\end{align}
Then, the LP written below is called \emph{integer relaxed pairwise separable} w.r.t.~the graph $\BG$.
\begin{align}
\min_{\mu \in \Lambda} \quad
	& \sum_{j \in \BV} \sum_{k = 1}^{d_j} \theta_{jk} \mu_{jk} \label{eq:IPSLP-objective}\\
\textnormal{subject to} \quad
	& \forall \{j,k\} \in E: \quad A_{(j,k)} \mu_j = A_{(k,j)} \mu_k \label{eq:IPSLP-constraints}
\end{align}
\end{definition}

\section{Dual Decomposition}
\label{sec:DD}

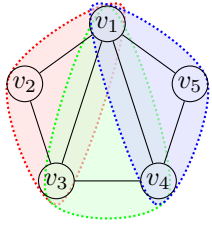
\begin{figure}
\centering
\begin{tikzpicture}[scale=0.7]
\draw (1*72+9:1.8) node[draw=none] (h1) {};
\draw (2*72+18:1.8) node[draw=none] (h2) {};
\draw (3*72+27:1.8) node[draw=none] (h3) {};
\draw[thick,red,densely dotted,fill=red!15,fill opacity=0.6] plot[smooth cycle,tension=0.6] coordinates {(h1) (h2) (h3)};
\draw (1*72+18:1.8) node[draw=none] (j1) {};
\draw (3*72+18:1.8) node[draw=none] (j2) {};
\draw (4*72+18:1.8) node[draw=none] (j3) {};
\draw[thick,green,densely dotted,fill=green!15,fill opacity=0.6] plot[smooth cycle,tension=0.6] coordinates {(j1) (j2) (j3)};
\draw (1*72+27:1.8) node[draw=none] (k1) {};
\draw (4*72+9:1.8) node[draw=none] (k2) {};
\draw (5*72+18:1.8) node[draw=none] (k3) {};
\draw[thick,blue,densely dotted,fill=blue!15,fill opacity=0.6] plot[smooth cycle,tension=0.6] coordinates {(k1) (k2) (k3)};
\draw \foreach \x in {1,2,...,5} 
{
	(\x*72+18:1.5) node (v\x) {$v_{\x}$} 
};
\draw (v1) -- (v2) -- (v3) -- (v4) -- (v5) -- (v1);
\draw (v1) -- (v3);
\draw (v1) -- (v4);
\end{tikzpicture}\\[2ex]
\caption{Depicted above is a triangulated cycle (black) covered by three triangles (\textcolor{red}{red}, \textcolor{green}{green} and \textcolor{blue}{blue})}
\label{fig:CycleTriangulation}
\end{figure}

% We will use variable names $\mu$ whenever we want to emphasize that $\mu_i \in \conv(\SX_i)$ and $x$ whenever $x_i \in \SX_i$, $i\in\BV$.
% Below, we will always have a \emph{polyhedral representation} of the subset $\SX_i$ at hand, i.e.\ $\conv(\SX_i) = \{ \mu_i: A_i \mu_i \leq b_i\}$.

% \subsection{Multicut factors}
% \label{sec:multicut-factors}

A straight-forward decomposition of the minimum cost multicut problem \eqref{eq:mp-objective}--\eqref{eq:mp-integrality} in the form of an \IPSLP\ (Def.~\ref{def:\IPSLP}) consists of one subproblem for every edge, one subproblem for every cycle inequality and one subproblem for every odd-wheel inequality.
From a computational perspective, it is however advantageous to triangulate cycles and odd wheels, and to consider the resulting smaller subproblems.
Below, three classes of subproblems are defined rigorously.

%\subsection{Edge Subproblems}
\paragraph{Edge Subproblems.}
For every edge $e \in E$, we consider a subproblem $e \in \BV$ with the feasible set $X_e := \{0,1\}$, encoding whether edge $e$ is cut (1) or uncut (0).

%\subsection{Triangle Subproblems}
\paragraph{Triangle Subproblems}
For every cycle $C = \{v_1 v_2, v_2 v_3,\ldots v_k v_1\} \subseteq E$, we consider the triangles $v_1 v_2 v_3$ to $v_{k-1} v_k v_1$, as depicted in Fig.~\ref{fig:CycleTriangulation}.
If some edge $uv$ of a triangle $C_i$ is not in $E$, we add it to $E$ with cost zero, i.e., we triangulate the cycle in $G$.
For each triangle $uvw$, we introduce a subproblem $uvw \in \BV$ whose feasible set consists of the five feasible multicuts of the triangle, i.e., $X_{uvw} := \{(0,0,0),(0,1,1),(1,0,1),(1,1,0),(1,1,1)\}$.

%\subsection{Lollipop Subproblems}
\paragraph{Lollipop Subproblems}
For every odd number $k \in \mathbb{N}$ and every $k$-wheel of $G$ consisting of a center node $u$ and cycle nodes $v_1, \ldots, v_k$, we introduce two classes of subproblems.
For the 5-wheel depicted in Fig.~\ref{fig:OddWheel}, these subproblems are depicted in Fig.~\ref{fig:OddWheelTriangulation}.

For every $j \in \{2, \ldots, k\}$, we add the triangle subproblem $u v_1 v_j \in \BV$, as described in the previous section.

For every $j \in \{2, \ldots, k-1\}$, we add the subproblem $u v_j v_{j+1}, v_1 \in \BV$ for the lollipop graph that consists of the triangle $u v_j v_{j+1}$ and the additional edge $u v_1$.
The feasible set $X_{uvw,s}$ of a lollipop graph $uvw,s$ has ten elements, five feasible multicuts of the triangle times two for accounting for the additional edge.

\subsection{Dependencies}

The dependency between triangle subproblems and edge subproblems are expressed below in the form of a linear system.
It fits into thee form \eqref{eq:IPSLP-constraints} of an \IPSLP.
\begin{align*}
	\mu_{uv} & = \mu_{uvw}(1,1,0) + \mu_{uvw}(1,0,1) + \mu_{uvw}(1,1,1)\\
	\mu_{uw} & = \mu_{uvw}(1,1,0) + \mu_{uvw}(0,1,1) + \mu_{uvw}(1,1,1)\\
	\mu_{vw} & = \mu_{uvw}(1,0,1) + \mu_{uvw}(0,1,1) + \mu_{uvw}(1,1,1)
\end{align*}

The dependency between a lollipop subproblem with edge set $L = \{e_1,e_2,e_3,e_4\}$ and a triangle subproblem with edge set $T = \{e'_1,e'_2,e'_3\}$ is stated below as a linear system with sums over edges not shared between $L$ and $T$.
This linear system has the form \eqref{eq:IPSLP-constraints} of an \IPSLP.
\begin{align*}
\forall x_{L \cap T}:
\sum_{x_{L \backslash T}} \mu_{L}(x_{L \cap T}, x_{L \backslash T}) = \sum_{x_{T \backslash L}} \mu_{T}(x_{T \cap L}, x_{T \backslash C})
\end{align*}

\subsection{Remarks}

\textbf{Remark~1.} The triangulation of cycles can be understood as the constructing of a junction tree~\cite{WainwrightBook} in such a way that the minimum cost multicut problem over the cycle can be solved by dynamic programming.
The triangulation of cycles can also be understood as a tightening of an outer polyhedral relaxation of the multicut polytope:
A cycle inequality \eqref{eq:mp-cycles} defines a facet of the multicut polytope if and only if the cycle is chordless \cite{ChopraMulticut}.
By triangulating a cycle, we obtain a set of minimal chordless cycles (triangles) whose cycle inequalities together imply that of the entire cycle.

\textbf{Remark~2.} Technically, we would not have needed to include triangle subproblems for odd wheels.
Instead, we could have introduced dependencies between lollipops directly in the form of an \IPSLP.
However, by introducing triangle factors in addition and by expressing dependencies between lollipops and triangles, we couple lollipop factors from different odd wheels more tightly whenever they share the same triangles.

%\begin{figure}
%   \centering
%   \begin{tikzpicture}
%      \draw (0,0) node (u)  {$u$};
%      \draw \foreach \x in {1,2,...,5} 
%      {
%         (\x*72+18:1.5) node (v\x) {$v_{\x}$}
%         (v\x) -- (u)
%      };
%      \draw (v1) -- (v2) -- (v3) -- (v4) -- (v5) -- (v1);
%
%      %\draw (1*72+18:1.8) node[draw=none] (k1) {};
%      \coordinate (k11) at (1*72+9:1.8) ;
%      \coordinate (k12) at (1*72+27:1.8) ;
%      \coordinate (k01) at (-0.3,0.2) ;
%      \coordinate (k02) at (0.3,0) ;
%      \coordinate (k2) at (2*72+10:1.9) ;
%      \coordinate (k3) at (3*72+23:1.9) ;
%      %\draw[thick,blue,densely dotted,fill=blue!15,fill opacity=0.6] plot[smooth,tension=0.5] coordinates {(k12) (k01) (k2) (k3) (k02) (k11) (k12)};
%      %\draw[thick,blue, fill=blue!15,fill opacity=0.6] plot[rounded corners=12pt] coordinates{(k2.center) (k02.center) (k3.center) (k2.center)};
%      \draw[blue,fill=blue!15,fill opacity=0.6,rounded corners=8pt] (k2.center) -- (k01.center) -- (k12.center) -- (k11.center)-- (k02.center) -- (k3.center) -- cycle;
%      
%      \coordinate (h3) at (3*72+9:1.9);
%      \coordinate (h4) at (4*72+27:1.9);
%      \draw[green,fill=green!15,fill opacity=0.5,rounded corners=8pt] (k12) -- (k11) -- (k02) -- (h4) -- (h3) -- (k01)--cycle;
%
%   \end{tikzpicture}
%   %\label{fig:OddWheeltriangulation}
%   \caption{Exemplary triangulation of an odd wheel}
%\end{figure}

\begin{figure}
	\small
	\centering
	\begin{tikzpicture}[scale=0.8]
	\draw (-1.8,0.6) + (1*72+18:1.5) node (v1) {$v_{1}$};
	\draw (-1.8,0.6) + (2*72+18:1.5) node (v2) {$v_{2}$};
	\draw (-1.8,0.6) + (0,0) node (u)  {$u$};
	\draw (v1) -- (v2) -- (u) -- (v1);
	%%%%%%%%%%%
	\draw (-1.4,0) + (1*72+18:1.5) node (v1) {$v_{1}$};
	\draw (-1.4,0) + (2*72+18:1.5) node (v2) {$v_{2}$};
	\draw (-1.4,0) + (3*72+18:1.5) node (v3) {$v_{3}$};
	\draw (-1.4,0) + (0,0) node (u)  {$u$};
	\draw (v1) -- (u) -- (v2) -- (v3) -- (u);
	%%%%%%%%%%%
	
	\draw (-0.7,0) + (1*72+18:1.5) node (v1) {$v_{1}$};
	\draw (-0.7,0) + (3*72+18:1.5) node (v3) {$v_{3}$};
	\draw (-0.7,0) + (0,0) node (u)  {$u$};
	\draw (v1) to[bend left=10] (v3) -- (u) -- (v1);
	
	%%%%%%%%%%%
	
	\draw (1*72+18:1.5) node (v1) {$v_{1}$};
	\draw (3*72+18:1.5) node (v3) {$v_{3}$};
	\draw (4*72+18:1.5) node (v4) {$v_{4}$};
	\draw (0,0) node (u)  {$u$};
	\draw (v1) -- (u) -- (v3) -- (v4) -- (u);
	
	%%%%%%%%%%%
	\draw (0.7,0) + (1*72+18:1.5) node (v1) {$v_{1}$};
	\draw (0.7,0) + (4*72+18:1.5) node (v4) {$v_{4}$};
	\draw (0.7,0) + (0,0) node (u)  {$u$};
	\draw (v1) to[bend right=10] (v4) -- (u) -- (v1);
	%%%%%%%%%%%
	\draw (1.4,0) + (1*72+18:1.5) node (v1) {$v_{1}$};
	\draw (1.4,0) + (4*72+18:1.5) node (v4) {$v_{4}$};
	\draw (1.4,0) + (5*72+18:1.5) node (v5) {$v_{5}$};
	\draw (1.4,0) + (0,0) node (u)  {$u$};
	\draw (v1) -- (u) -- (v4) -- (v5) -- (u);
	%%%%%%%%%%%
	\draw (1.8,0.6) + (1*72+18:1.5) node (v1) {$v_{1}$};
	\draw (1.8,0.6) + (5*72+18:1.5) node (v5) {$v_{5}$};
	\draw (1.8,0.6) + (0,0) node (u)  {$u$};
	\draw (v1) -- (v5) -- (u) -- (v1);
	\end{tikzpicture}\\[1ex]
	\caption{
		Depicted above is a triangulation of the odd wheel from Figure~\ref{fig:OddWheel}. It consists of the triangles $uv_1v_2, uv_1v_3, uv_1v_4, uv_1v_5$ and the lollipop graphs $(uv_2v_3,v_1), (uv_3v_4,v_1), (uv_4v_5,v_1)$.
	}
	\label{fig:OddWheelTriangulation}
\end{figure}
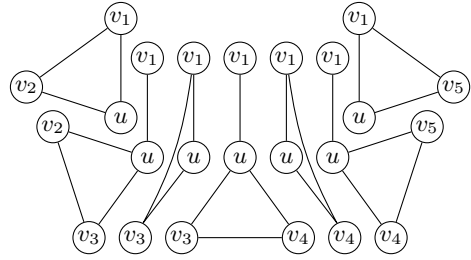

\section{Algorithm}
\label{sec:Algorithm}
We now define an algorithm for the minimum cost multicut problem \eqref{eq:mp-objective}--\eqref{eq:mp-integrality}.
This algorithm takes an instance of the problem as input and alternates for a fixed number of iterations between two main procedures.

The first procedure, defined in Sec.~\ref{sec:message-passing}, solves an instance of a dual of the \IPSLP\ relaxation defined in the previous section.
The output consists in a lower bound and a re-parameterization of the instance of the minimum cost multicut problem given as input.
The second procedure tightens the \IPSLP\ relaxation by adding subproblems for cycle inequalities \eqref{eq:mp-cycles} and odd wheel inequalities \eqref{eq:OddWheelConstraint} violated by the current solution.
Separation procedures for finding such violated inequalities, more efficiently than in cutting plane algorithms for the primal \cite{KappesMulticut,HigherOrderSegmentationByMulticuts,SungwoongHigherOrderCorrelationClustering}, are defined in Sec.~\ref{sec:separation}.

To find feasible solutions of the instance of the minimum cost multicut problem given as input, we apply a state-of-the-art local search algorithm on the computed re-parameterizations, a procedure commonly referred to as \emph{rounding} (Sec.~\ref{sec:rounding}).

\subsection{Message Passing}
\label{sec:message-passing}
Like other algorithms based on dual decomposition, the algorithm we propose does not solve the \IPSLP\ directly, in the primal domain, but optimizes a dual of \eqref{eq:IPSLP-objective}--\eqref{eq:IPSLP-constraints}.
Specifically, it operates on a space of re-parametrizations of the problem defined below:
For any two dependent subproblems $jk \in \BE$, we can change the costs $\theta_j$ and $\theta_k$ by an arbitrary vector $\Delta$ according to the \emph{update rules}
\begin{align}
\theta'_j & := \theta_j + A_{(j,k)}^\T \Delta \label{eq:update-rule-1}\\
\theta'_k & := \theta_k - A_{(k,j)}^\T \Delta \label{eq:update-rule-2}
\enspace .
\end{align}
We refer to any update of $\theta$ according to the rules \eqref{eq:update-rule-1}--\eqref{eq:update-rule-2} as \emph{message passing}.
Message passing does not change the cost of any primal feasible solution, as
\begin{align}
   & \la \theta'_j, \mu_j \ra + \la \theta'_k, \mu_k \ra  \nonumber\\
= \, & \la \theta_j + A_{(j,k)}^\T \Delta, \mu_j \ra + \la \theta_k - A_{(k,j)}^\T \Delta, \mu_k \ra\\
= \, & \la \theta_j, \mu_j \ra + \la \theta_k, \mu_k \ra + \la \Delta, A_{(j,k)} \mu_j - A_{(k,j)} \mu_k \ra \\
\overset{\eqref{eq:IPSLP-constraints}}{=} \, & \la \theta_j, \mu_j \ra + \la \theta_k, \mu_k \ra 
\enspace .
\end{align}

Message passing does, however, change the \emph{dual lower bound}  $L(\theta)$ to~\eqref{eq:IPSLP-objective} given by
\begin{align}
	L(\theta) := \sum_{j \in \BV} \min_{x \in X_i} \la \theta_j, x_j \ra
	\enspace .
\end{align}

The maximum of $L(\theta)$ over all costs obtainable by message passing is equal to the minimum of \eqref{eq:IPSLP-objective}, by linear programming duality.
We seek to alter the costs $\theta$ by means of message passing so as to maximize the lower bound $L(\theta)$.
For the general \IPSLP, a framework of algorithms to achieve this goal is defined in
\cite{ConvergentMessagePassingNIPS}.
For the minimum cost multicut problem, we define and implement Alg.~\ref{alg:multicut-message-passing} within this framework.
The specifics of this algorithm for the minimum cost multicut problem are discussed below.
General properties of message passing for \IPSLP\ s are discussed in
\cite{ConvergentMessagePassingNIPS}.

\begin{algorithm}[t]
\KwData{$\{i_1,\ldots,i_k\} = \BV$, $(\theta_{i})_{i \in \BV}$, $(A_{(j,i)},A_{(i,j)})_{ij \in \BE}$}
\For{$i = i_1,\ldots,i_k$} {
	\If{$i$ is an edge subproblem $uv$:} {
		\textbf{Receive messages:} \\
        \For{$w \in \FV: uvw \in T$} {
			$\delta := \min_{x_{uw},x_{vw}} \theta_{uvw}(1,x_{uw},x_{vw})$ \\
				\hfill $ - \min_{x_{uw},x_{vw}} \theta_{uvw}(0,x_{uw},x_{vw}) $ \\
			$\theta_{uv} \,\textnormal{+=}\, \delta$ \\
			$\forall x_{uw},x_{vw}: \, \theta_{uvw}(1,x_{uw},x_{vw}) \,\textnormal{-=}\, \delta$ \\
        }
        \textbf{Send messages:} \\
			$\delta := \abs{\{w \in \FV: uvw \in T\}}^{-1} \theta_{uv}$\\
			$\theta_{uv} := 0$\\
				\For{$w \in \FV: uvw \in T$} {
					$\forall x_{uw},x_{vw}: \, \theta_{uvw}(1,x_{uw},x_{vw}) \,\textnormal{+=}\, \delta$\\
		}
    }
	\If{$i$ is a triangle subproblem $uvw$ with edges $C$:} {
		\textbf{Receive messages:} \\
		\For{lollipops $L$ with $L\cap C \neq \varnothing$} {
			$\delta(x_{L \cap C}) := \min_{x_{L \backslash C}} \theta_{L}(x_{L \cap C},x_{L \backslash C})$\\
			$\theta_{C}(x_{L \cap C},x_{C \backslash L}) \,\textnormal{+=}\, \delta(x_{L\cap C})$\\
			$\theta_{L}(x_{L \cap C},x_{L \backslash C}) \,\textnormal{+=}\, \delta(x_{L\cap C}$\\
		}
		\textbf{Send messages:} \\
		$\alpha := \abs{ \{L \text{ a lollipop} : L \cap C \neq \varnothing\} }$\\
		\For{lollipops $L$ with $L\cap C \neq \varnothing$} {
			$\delta_{L}(x_{L \cap C}) := \min_{x_{C \backslash L}} \theta_{uvw}(x_{L \cap C},x_{C \backslash L})$\\
			$\theta_{L}(x_{L \cap C},x_{L \backslash C}) \,\textnormal{+=}\, \frac{1}{\alpha} \delta_L(x_{L\cap C})$\\
		}
		\For{lollipops $L$ with $L\cap C \neq \varnothing$} {
			$\theta_{C}(x_{L\cap C}, x_{C \backslash L}) \,\textnormal{+=}\, \frac{1}{\alpha} \delta_L(x_{L \cap C})$\\
		}
	}
}
\caption{Message passing for the multicut problem}
\label{alg:multicut-message-passing}
\end{algorithm}

\paragraph{Factor Order.}
Alg.~\ref{alg:multicut-message-passing} iterates through all edge and triangle subproblems.
The order is specified as follows:
We assume that a node order is given.
With respect to this node order, edges $uv \in E$ are ordered lexicographically.
For every triangle and its edge set $C = \{e_1,e_2,e_3\} \subseteq E$ with $e_1 < e_2 < e_3$, we define the ordering constraint $e_1 < C < e_3$.
For every lollipop graph and its edge set $L = \{e_1, e_2, e_3, e_4\}$ with $e_1 < e_2 < e_3 < e_4$, we define the ordering constraint $e_1 < L < e_4$.
The strict partial order defined by these constraints is extended to a total order by topological sorting.

\paragraph{Message Passing Description.}
When an edge subproblem $uv \in \BE$ is visited, Alg.~\ref{alg:multicut-message-passing} receives messages from all dependent triangle subproblems. 
Having received a message from triangle $uvw \in \BE$, the costs $\theta_{uvw}$ satisfy the condition
\begin{align*}
  \min_{x_{uw},x_{vw}} \!\! \theta_{uvw}(0,x_{uw},x_{vw}) 
= \min_{x_{uw},x_{vw}} \!\! \theta_{uvw}(1,x_{uw},x_{vw})
\enspace .
\end{align*}
In other words, the cost of the triangle factor $\theta_{uvw}$ has no preference for either $x_{uv} = 0$ or $x_{uv} = 1$.
Sending messages from $\theta_{uv}$ is analoguous:
Having sent messages from $uv$, we have $\theta_{uv} = 0$, i.e., there is again no preference for either $x_{uv} = 0$ or $x_{uv} = 1$.

When we visit a triangle subproblem $uvw$, we do the analogous with all dependent lollipop subproblems: 
Once messages have been received, lollipop subproblems have no preference for incident edges.
Once messages have been sent, this holds true for the triangle subproblems.

Once Alg.~\ref{alg:multicut-message-passing} has visited all subproblems and terminates, we reverse the order of subproblems and invoke Alg.~\ref{alg:multicut-message-passing} again. 
This double call of Alg.~\ref{alg:multicut-message-passing} is repeated for a fixed number of iterations that is a parameter of our algorithm.

\subsection{Separation}
\label{sec:separation}

Applying Alg.~\ref{alg:multicut-message-passing} with all cycles and all odd wheels of a graph $G$ is impractical, as the number of triangles for cycle inequalities~\eqref{eq:mp-cycles} is cubic, and the number of lollipop graphs for odd wheels~\eqref{eq:OddWheelConstraint} is quartic in $|E|$.
In order to arrive at a practical algorithm, we take a cutting plane approach in which we separate and add subproblems for violated cycle and odd wheel inequalities periodically.
Initially, $\BV$ contains only one element for every edge $e \in E$, and $\BE$ is empty.

In the primal, given some fractional $x \in [0,1]^E$, it is common to look for maximally violated inequalities~\eqref{eq:mp-cycles} and~\eqref{eq:OddWheelConstraint}.
This is possible in polynomial time via shortest path computations~\cite{ChopraMulticut,GeometryOfCutsAndMetrics}.
In our dual formulation, we have no primal solution $x$ to search for violated inequalities.
Here, a suitable criterion is to consider those additional triangle or lollipop subproblems that necessarily increase the dual lower bound $L(\theta)$ by some constant $\epsilon > 0$.
Among these subproblems, we choose those for which the increase is maximal and add them to the graph $(\BV,\BE)$.
A similar dual cutting plane approach has shown to be useful for graphical models in~\cite{FrustratedCyclesSontagEtAl}.
As we discuss below, separation is more efficient in the dual than in the primal.

\subsubsection{Cycle Inequalities}

\begin{algorithm}[t]
	\KwData{$G = (V,E), \, \, \epsilon \geq 0, \, \theta_e \in \R$}
	$l := 1$\\
	\For{$uv \in E$} {
		\If{$\theta_{uv} \geq \epsilon$} {
			$\textnormal{union}(u,v)$\\
		}
	}
	\For{$uv \in E$} {
		\If{$\theta_{uv} \leq -\epsilon$ \textbf{and} \textnormal{find}(u) = \textnormal{find}(v)} {
			$C_l := \textnormal{shortest-path}(u,v,\epsilon)$ \\
			$l := l + 1$\\
		}
	}
	% \Output{Cycles $C_1,\ldots,C_l$}
	\caption{Separation of cycle inequalities~\eqref{eq:mp-cycles}}
	\label{alg:FindViolatedCycle}
\end{algorithm}

We characterize those cycles whose subproblem increases the dual lower bound $L(\theta)$ by at least $\epsilon$.
\begin{proposition}
  \label{prop:CycleCharacterization}
Let $C = \{e_1,\ldots,e_k\}$ be a cycle with $\theta_{e_1} \leq -\epsilon$ and $\theta_{e_l} \leq \epsilon$ for $l > 1$.
Then, the dual lower bound $L(\theta)$ can be increased by $\epsilon$ by including a triangulation of $C$.
\end{proposition}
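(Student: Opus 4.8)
The plan is to compare the contribution of the edges of $C$ to the dual bound $L(\theta)$ before and after the triangulation of $C$ is added, and to show the difference is at least $\epsilon$. First I would read the hypothesis in the regime produced by Alg.~\ref{alg:FindViolatedCycle}, namely $\theta_{e_1}\le-\epsilon$ and $\theta_{e_l}\ge\epsilon$ for $l>1$ (the inequality $\theta_{e_l}\le\epsilon$ as printed cannot give the claim). In this regime, while the edges of $C$ are still independent edge subproblems, their joint contribution to $L(\theta)$ is $\sum_{e\in C}\min(0,\theta_e)=\theta_{e_1}$: only $e_1$ carries negative cost and contributes $\theta_{e_1}$, whereas each $e_l$ with $l>1$ has $\theta_{e_l}\ge\epsilon>0$ and contributes $0$. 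The freshly introduced triangle subproblems and the zero-cost chords contribute $0$ as well, so the whole triangulated-cycle region initially contributes exactly $\theta_{e_1}$.

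Next I would lower-bound the cost of any multicut of the triangulated cycle. Restricting such a multicut to the original edges $e_1,\ldots,e_k$ and distinguishing two cases: if $e_1$ is not cut, the cheapest option cuts none of the remaining (positive-cost) edges, giving cost $0$; if $e_1$ is cut, the cycle inequality~\eqref{eq:mp-cycles} for $C$ — which by Remark~1 is implied by the triangle inequalities — forces at least one further edge $e_l$, $l>1$, to be cut, each of cost $\ge\epsilon$, so the cost is $\ge\theta_{e_1}+\epsilon$. Since $\theta_{e_1}\le-\epsilon$ yields $0\ge\theta_{e_1}+\epsilon$, in both cases the cost is at least $\theta_{e_1}+\epsilon$, and the chords (cost $0$) do not affect this; hence $\min_{M\in\mathcal{M}_C}\sum_{e\in M}\theta_e\ge\theta_{e_1}+\epsilon$.

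I would then invoke the junction-tree structure of the triangulated cycle (Remark~1): as an \IPSLP\ its relaxation is tight, so its LP optimum equals the integer optimum just bounded, i.e.\ $\ge\theta_{e_1}+\epsilon$. By the reparametrization invariance of Sec.~\ref{sec:message-passing} — the update rules~\eqref{eq:update-rule-1}--\eqref{eq:update-rule-2} preserve the total cost of every feasible $\mu$ — together with the fact that the maximum of $L(\theta)$ over reparametrizations equals this LP optimum, there is a sequence of messages, confined to the dependencies between the edges of $C$ and its newly added triangles, after which the summed contribution of the region equals $\min_{M\in\mathcal{M}_C}\sum_{e\in M}\theta_e\ge\theta_{e_1}+\epsilon$. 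Because these messages touch only the cost vectors of the cycle's edges and of its own triangles, the local minima of all subproblems outside the region are unchanged. Combining with the first step, the region's share of $L(\theta)$ rises from $\theta_{e_1}$ to at least $\theta_{e_1}+\epsilon$, an increase of at least $\epsilon$.

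The main obstacle is this third step: making rigorous that message passing restricted to the new region attains the region's LP optimum without lowering the bound elsewhere. I would lean on two facts cited rather than reproved — junction-tree exactness (so the cycle's relaxation is tight and its dual optimum equals the integer minimum) and cost-preservation of the update rules — and on the bookkeeping that $e_1,\ldots,e_k$, although possibly shared with triangles of other cycles or wheels, have their costs moved only between themselves and this cycle's triangles, leaving every other subproblem's contribution to $L(\theta)$ intact.
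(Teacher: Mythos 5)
Your proof is correct and follows essentially the same route as the paper's: both identify the pre-triangulation contribution as $\theta_{e_1}$, use the same two-case analysis (either $x_{e_1}=0$, or the cycle inequality forces a second cut edge of cost at least $\epsilon$) to show the constrained optimum is at least $\theta_{e_1}+\epsilon$, and invoke LP duality to conclude that the maximum of the region's dual bound over reparametrizations attains this value; you also correctly spot that the hypothesis must read $\theta_{e_l}\geq\epsilon$ for $l>1$. The only cosmetic difference is that you pass through the integer optimum and junction-tree exactness, whereas the paper stays entirely in the LP relaxation and merely observes that the triangle factors enforce at least the cycle inequality, so the two arguments are interchangeable.
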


In order to find such cycles, we apply Alg.~\ref{alg:FindViolatedCycle}.
This algorithm first records in a disjoint set data structure whether distinct nodes $u,v \in V$ are connected via edges with weight $\geq \epsilon$.
Then, it visits all edges $e \in E$ with $\theta_e \leq - \epsilon$.
If the endpoints of $e$ are connected by a path along which all edges have weight at least $\epsilon$, it searches for a shortest such path by means of breadth first search.

In the primal, finding a maximally violated cycle inequality~\eqref{eq:mp-cycles} is more expensive, requiring, for every edge $uv \in E$, the search for a $uv$-path with minimum cost $x$~\cite{ChopraMulticut} by, e.g., Dijkstra's algorithm.

\subsubsection{Odd Wheel Inequalities}

\begin{algorithm}[t]
\KwData{Triangles $uvw$, costs $\theta_{uvw}$, $\epsilon \geq 0$} 
$l := 0$\\
\For{$u \in V$} {
	$G' = (V', E'), V' = \varnothing$, $E' = \varnothing$, $Connect = \varnothing$\\
	\For{triangles $uvw$} {
		\If{\eqref{eq:OddWheelTriangleTest} holds true}{
			$V' := V' \cup \{v,v',w,w'\}$\\
			$E' := E' \cup \{vw', v'w\}$\\
			union$(v,v')$\\
			union$(w,w')$\\
		}
	}
	\For{$v \in V' \cap V$}{
		\If{$\textnormal{find}(u) = \textnormal{find}(v)$} {
			$P' := \textnormal{shortest-path}_{G'}(u,v,\epsilon)$\\
			$C = \{uv \in E \,|\, uv' \in P' \vee u'v \in P'\}$\\
			\If{$C$ is a simple cycle in $G$}{
				$O_l := \{u,P\}$\\
				$l := l + 1$\\
			}
		}
	}
}
\caption{Separation of odd wheel inequalities~\eqref{eq:OddWheelConstraint}}
\label{alg:FindViolatedOddWheel}
\end{algorithm}

We characterize those odd wheels whose lollipop subproblem increases the lower bound $L(\theta)$ by at least $\epsilon$.

\begin{proposition}
  \label{prop:OddWheelCharacterization}
Let $O$ an odd wheel with center node $u$ and cycle nodes $v_1,\ldots,v_k$. 
Adding the lollipop subproblems for $O$ increases $L(\theta)$ by at least $\epsilon$ if the costs $\theta_{u v_i v_{i+1}}$ of each triangle $uv_iv_{i+1}$ are such that the minimal cost of any edge labeling of the triangle cutting precisely one edge incident to $u$ is smaller by $\epsilon$ than the minimal cost of any edge labeling of the triangle cutting $0$ or $2$ edges incident to $u$.
That is:
\begin{align}
& \min_{\{x: \, x_{uv_i} +x_{uv_{i+1}} = 1\}} \theta_{u v_i v_{i+1}}(x) + \epsilon \nonumber\\
\leq \quad & \min_{\{x: \, x_{uv_i} +x_{uv_{i+1}} \neq 1\}} \theta_{u v_i v_{i+1}}(x)
\enspace .
\label{eq:OddWheelTriangleTest}
\end{align}
\end{proposition}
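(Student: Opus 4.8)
The plan is to reduce the claim to a parity argument about the spoke labels, combined with the exactness of the dual bound on the junction-tree structure formed by the lollipop triangulation. First I would introduce, for each rim triangle $uv_iv_{i+1}$, the two quantities
$$a_i := \min_{x_{uv_i}+x_{uv_{i+1}}=1} \theta_{uv_iv_{i+1}}(x), \qquad b_i := \min_{x_{uv_i}+x_{uv_{i+1}}\neq 1} \theta_{uv_iv_{i+1}}(x),$$
so that hypothesis \eqref{eq:OddWheelTriangleTest} reads $a_i + \epsilon \leq b_i$. The feasible triangle labelings cutting an odd number of spokes are exactly $(1,0,1)$ and $(0,1,1)$, while those cutting an even number are $(0,0,0),(1,1,0),(1,1,1)$; together these exhaust $X_{uv_iv_{i+1}}$, so $\min_x \theta_{uv_iv_{i+1}}(x)=\min(a_i,b_i)=a_i$. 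Hence the contribution of the $k$ rim triangles to the current value of $L(\theta)$, taken independently, is $\sum_{i=1}^k a_i$; I will call this the \emph{baseline}.

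The combinatorial heart is the following. Any feasible multicut of the odd wheel restricts on each triangle $uv_iv_{i+1}$ to a feasible triangle labeling, so its total cost is at least $\sum_i a_i$ plus $\epsilon$ times the number of rim triangles cutting an even number of spokes, because each such triangle incurs $b_i \geq a_i + \epsilon$ while every other incurs at least $a_i$. I would then show this number is at least one. Writing $y_i := x_{uv_i}$, triangle $i$ cuts exactly one spoke iff $y_i \neq y_{i+1}$; if this held for all $i$, the cyclic sequence $y_1,\dots,y_k,y_1$ would alternate at every step, which is impossible for a cycle of odd length $k$. Hence at least one rim triangle has $y_i = y_{i+1}$, and every feasible multicut of the wheel costs at least $\sum_i a_i + \epsilon$.

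It remains to convert this bound on multicuts into the asserted increase of $L(\theta)$. Here I would invoke Remark~1: the fan triangles $uv_1v_j$ together with the lollipops $(uv_jv_{j+1},v_1)$ form a junction tree over the edges of the wheel, in which the spoke $uv_1$ is shared by every subproblem and consecutive spokes are shared by neighbouring subproblems. The marginalization constraints \eqref{eq:IPSLP-constraints} therefore enforce a globally consistent spoke vector, which is exactly the consistency that makes the parity argument bite and the reason the lollipops are needed rather than independent triangles. Since the structure is a junction tree whose local feasible sets are the convex hulls of feasible multicuts, its \IPSLP\ relaxation is integral and its optimum equals the minimum-cost multicut of the wheel, which we bounded below by $\sum_i a_i + \epsilon$. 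By linear-programming duality the largest bound $L(\theta)$ attainable by message passing on this structure equals that optimum; as reparametrization preserves total cost and leaves all subproblems outside the wheel untouched, while the wheel's baseline was $\sum_i a_i$, adding the lollipops and reparametrizing raises $L(\theta)$ by at least $\epsilon$.

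The parity argument is short and is the conceptual core, so I expect the main obstacle to be the last step: verifying that the lollipop triangulation genuinely has the running-intersection property, so that integrality of the \IPSLP\ and hence tightness of duality apply, and carefully accounting that the baseline contribution of the wheel is exactly $\sum_i a_i$. Only with this bookkeeping can the gap between the junction-tree optimum and the baseline be attributed cleanly to the newly added lollipop subproblems.
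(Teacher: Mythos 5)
Your proof is correct and follows essentially the same route as the paper's: the parity obstruction (an odd cycle cannot have every consecutive pair of spoke labels differ, so some rim triangle must take a labeling costing at least $\epsilon$ above its local minimum) combined with exactness of the lollipop triangulation on the wheel. You merely spell out what the paper leaves implicit — the explicit $a_i,b_i$ bookkeeping, the baseline $\sum_i a_i$, and the junction-tree justification of the phrase ``our construction of lollipop factors ensures exactness on odd wheels'' — and your identification of that exactness claim as the step requiring verification is apt, since the paper asserts it without proof.
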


In order to find such odd wheels, we apply Alg.~\ref{alg:FindViolatedOddWheel}.
This algorithm builds on our observation that we need to look only at triangles whose subproblem has already been added.
Hence, Alg.~\ref{alg:FindViolatedOddWheel} visits each node $u \in V$ and builds a bipartite graph $G' = (V',E')$ as follows.
(An example is depicted in Fig.~\ref{fig:OddWheelSeparationGraph} for a 5-wheel and \eqref{eq:OddWheelTriangleTest} holding true for all triangles of the wheel.)
For each triangle $u v k$ such that~\eqref{eq:OddWheelTriangleTest} holds true, four nodes $v, v', k, k' \in V'$ are added to $V'$, two copies of each original node.
These are joined by edges $uv', u'v \in E'$.
If a path from $u$ to $u'$ exists in $G'$, we have found a violated odd wheel inequality~\eqref{eq:OddWheelConstraint}.
As $G'$ is bipartite, a $uu'$-path in $G'$ corresponds to an odd cycle in $G$.
As before, the search for paths is accelerated by connectivity tests via a disjoint set data structure and is carried out by breadth first search.

In the primal, finding a maximally violated odd wheel inequality~\eqref{eq:OddWheelConstraint} entails the same construction of the bipartite graph $G'$ for each node $u \in V$~\cite{GeometryOfCutsAndMetrics}.
However, a shortest path search w.r.t.~edge costs $\frac{1}{2} - x_{vw} + \frac{1}{2}(x_{uv} + x_{uw})$ needs to be carried out by Dijkstra's algorithm instead of breadth first search.
Further complication in the primal comes from the fact that a separation algorithm needs to visit all $v \in V'$ in order to compute the shortest $vv'$-path in $G'$.

\begin{figure}
	\small
	\centering
	\begin{tikzpicture}[yscale=0.5]
	\tikzset{edge/.style={-latex}}
	% \tikzset{edge/.style={-{>[scale=2.5, length=2, line width=1.5]}}}
	
	\draw (0,0) node (v1)  {$v_1$};
	\draw (0,2) node (v1')  {$v'_1$};
	\draw (1,0) node (v2)  {$v_2$};
	\draw (1,2) node (v2')  {$v'_2$};
	\draw (2,0) node (v3)  {$v_3$};
	\draw (2,2) node (v3')  {$v'_3$};
	\draw (3,0) node (v4)  {$v_4$};
	\draw (3,2) node (v4')  {$v'_4$};
	\draw (4,0) node (v5)  {$v_5$};
	\draw (4,2) node (v5')  {$v'_5$};
	
	\draw[edge] (v1) to (v2');
	\draw[edge] (v1') to (v2);
	\draw[edge] (v2) to (v3');
	\draw[edge] (v2') to (v3);
	\draw[edge] (v3) to (v4');
	\draw[edge] (v3') to (v4);
	\draw[edge] (v4) to (v5');
	\draw[edge] (v4') to (v5);
	\draw[edge] (v5) to (v1');
	\draw[edge] (v5') to (v1);
	\end{tikzpicture}\\[1ex]
	\caption{Depicted above is the bipartite graph $G'$ constructed by Alg.~\ref{alg:FindViolatedOddWheel} for separating the 5-wheel depicted in Fig.~\ref{fig:OddWheel}.}
	\label{fig:OddWheelSeparationGraph}
\end{figure}
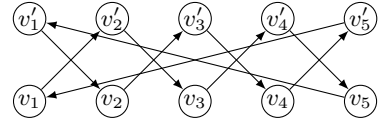

\subsection{Rounding}
\label{sec:rounding}

Our message passing Alg.~\ref{alg:multicut-message-passing} improves a dual lower bound on~\eqref{eq:mp-objective}, but does not provide a feasible solution of \eqref{eq:mp-objective}--\eqref{eq:mp-integrality}.
In order to obtain a feasible multicut, we apply a local search algorithm defined in \cite{ImageMeshDecompositionLiftedMulticut}, namely greedy additive edge contraction (GAEC), followed by Kernighan-Lin with joins (KLj).
GAEC computes a multicut by greedily contracting those edges for which the join decreases the cost maximally.
It stops as soon as no contraction of any edge strictly decreases the cost.
KLj attempts to improve a given multicut recursively by applying transformations from three classes: (1)~moving nodes between two components, (2)~moving nodes from a given component to a newly forming one or (3)~joining two components.
GAEC and KLj are local search algorithms that output a feasible multicut that need not be optimal.

We apply GAEC and KLj not only to the instance of the minimum cost multicut problem given as input but also to the re-parameterization of this instance output by Alg.~\ref{alg:multicut-message-passing}.
The rationale for doing so comes from LP duality:

\begin{proposition}
\label{prop:MulticutComplementarySlackness}
Assume $\theta$ maximizes the dual lower bound $L(\theta)$ and the relaxation is tight, i.e.
\begin{align}
L(\theta)
= \min_{\{x \in \{0,1\}^E \,|\, x^{-1}(1) \in \mathcal{M}_G\}} \la \theta, x \ra
\enspace .
\end{align}
Moreover, let $\hat x \in \{0,1\}^E$ such that $\hat x^{-1}(1)$ is an optimal multicut of $G$. Then,
\begin{align}
\theta_e \begin{cases} 
	\leq 0 & \textnormal{if}\ \hat x_e = 1 \\ 
	\geq 0, & \textnormal{if}\ \hat x_e = 0 
\end{cases}
\end{align}
\end{proposition}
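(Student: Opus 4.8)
The plan is to derive the sign condition from a complementary-slackness argument that exploits the subproblem decomposition, rather than by perturbing $\hat x$ edge-by-edge. First I would record the consequence of message-passing invariance: since updating $\theta$ by the rules \eqref{eq:update-rule-1}--\eqref{eq:update-rule-2} leaves the cost of every feasible multicut unchanged, the multicut $\hat x^{-1}(1)$ that is optimal for the original costs is still optimal for the reparameterized costs $\theta$. Hence $\la \theta, \hat x \ra = \min_{\{x \,:\, x^{-1}(1) \in \mathcal{M}_G\}} \la \theta, x \ra$, and by the tightness hypothesis this common value equals the dual bound $L(\theta)$.

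The key step is to turn this global equality into a statement about each individual subproblem. Writing $\hat x_j$ for the restriction of $\hat x$ to the variables of subproblem $j \in \BV$, feasibility of $\hat x$ guarantees $\hat x_j \in X_j$, so $\la \theta_j, \hat x_j \ra \ge \min_{x_j \in X_j} \la \theta_j, x_j \ra$ for every $j$. Summing these inequalities gives $\la \theta, \hat x \ra = \sum_j \la \theta_j, \hat x_j \ra \ge \sum_j \min_{x_j \in X_j} \la \theta_j, x_j \ra = L(\theta)$. Because the two extremes coincide by the previous paragraph, each inequality must hold with equality: $\hat x_j$ attains $\min_{x_j \in X_j} \la \theta_j, x_j \ra$ for every subproblem $j$, and in particular for every edge subproblem.

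It then remains to specialize to the edge subproblems, which is where the claim becomes transparent: the edge subproblem $e$ is unconstrained, with $X_e = \{0,1\}$ and objective $\theta_e x_e$. Since $\hat x_e$ minimizes $\theta_e x_e$ over $\{0,1\}$, the case $\hat x_e = 1$ forces $\theta_e = \theta_e \cdot 1 \le \theta_e \cdot 0 = 0$, while the case $\hat x_e = 0$ forces $0 = \theta_e \cdot 0 \le \theta_e \cdot 1 = \theta_e$; together these are exactly the two cases of the claimed sign condition.

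I expect the only genuine subtlety to be the justification that one may \emph{not} argue locally. Flipping a single coordinate of $\hat x$ need not yield a multicut --- uncutting one edge generally forces the merging of two clusters and hence the uncutting of several edges, and symmetrically for cutting --- so the per-edge sign condition cannot be read off directly from optimality of $\hat x$ among multicuts. Routing the argument through the decomposition sidesteps this obstacle entirely, since the edge subproblem carries no internal constraints and its minimizer is dictated solely by the sign of $\theta_e$; the combinatorial coupling between edges is absorbed into the triangle and lollipop subproblems, which play no role in the final step.
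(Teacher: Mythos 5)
Your proof is correct and is in substance the same argument the paper gives: the paper's proof is a one-line appeal to ``the complementary slackness conditions in linear programming duality,'' and your sandwich argument $\la \theta, \hat x \ra = \sum_j \la \theta_j, \hat x_j \ra \geq \sum_j \min_{x_j \in X_j} \la \theta_j, x_j \ra = L(\theta)$, forced to equality by tightness so that $\hat x_e$ minimizes each edge subproblem, is exactly how complementary slackness manifests for this dual decomposition. Your version has the merit of being self-contained (and of correctly flagging why a naive single-edge flip of $\hat x$ would not work), but it is not a different route.
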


Having run Alg.~\ref{alg:multicut-message-passing} for a while, we expect $\theta$ to fulfill the sign condition of Prop.~\ref{prop:MulticutComplementarySlackness} approximately.
Therefore, the sign of $\theta_e$ will be a good hint of the edge $e$ being cut.
Thus, informally, we expect local search algorithms operating on the re-parameterized instance of the problem to yield better feasible multicuts than local search algorithms operating on the given instance.

For MAP-inference in discrete graphical models, it is known from \cite{TRWSKolmogorov,SRMPKolmogorov} that primal rounding can be improved greatly when applied to cost functions re-parameterized by message passing.

\section{Experiments}
\label{sec:Experiments}

\paragraph{Solvers}
We compare against several state of the art algorithms.
\begin{itemize}
\item The algorithm \textbf{MC-ILP}~\cite{HigherOrderSegmentationByMulticuts} is an efficient implementation of a cutting plane algorithm solving~\eqref{eq:mp-objective} using cycle inequalities~\eqref{eq:mp-cycles} in a cutting plane fashion.
  CPlex~\cite{cplex} is used to solve the underlying ILP problems. The integrality conditions in~\eqref{eq:mp-integrality} are directly given to the solver. According to~\cite{HigherOrderSegmentationByMulticuts} this is beneficial due to the excellent branch and cut capabilities of CPlex~\cite{cplex}.
\item Cut, Glue \& Cut~\cite{CGC}, abbreviated as \textbf{CGC}, is a move making algorithm using planar max-cut subproblems to improve multicuts.
\item
  Fusion moves for correlation clustering~\cite{FusionMoveCC}, abbreviated as  \textbf{CC-Fusion}, fuses multicuts generated by various proposal generator with the help of auxiliary multicut problems, solved in turn by \textbf{MC-ILP}.  We use randomized hierarchical clustering and randomized watersheds as proposal generators, identified by the suffixes\textbf{-RHC} and \textbf{-RWS}. We use parameters for the proposal generators as recommended by the authors~\cite{FusionMoveCC}.
\item
  \textbf{MP-C} denotes Algorithm~\ref{alg:multicut-message-passing} when we only separate for cycle inequalities~\eqref{eq:mp-cycles} by Algorithm~\ref{alg:FindViolatedCycle}, while \textbf{MP-COW} denotes that we additionally separate for odd wheel inequalities~\eqref{eq:OddWheelConstraint} by Algorithm~\ref{alg:FindViolatedOddWheel}. We search for triangles and lollipops to add every 10th iteration.
\item \textbf{KL} is the GAEC and KLj implementation~\cite{ImageMeshDecompositionLiftedMulticut} described in Section~\ref{sec:rounding} for computing multicuts. 
  We let \textbf{KL} run every 100th iteration of \textbf{MP-C} and \textbf{MP-COW} on the current reparametrized edge costs.
\end{itemize}
\textbf{MC-ILP}, \textbf{CGC} and \textbf{CC-Fusion} are implemented as part of the OpenGM suite~\cite{OpenGMBenchmark}.
Only \textbf{MC-ILP} and our solvers \textbf{MP-C} and \textbf{MP-COW} generate dual lower bounds.
\textbf{CGC} also outputs dual lower bounds, but these are equivalent to the trivial lower bound $\sum_{e \in \FE} \min(0,\theta_e)$, where edge weights $\theta_e$ are as given by the problem.
It has been shown that \textbf{CGC}, \textbf{CC-Fusion} and \textbf{KL} outperform other primal heuristics~\cite{FusionMoveCC}, hence we do not compare to any other heuristic algorithm.
Also \textbf{MC-ILP} outperforms the LP-based solver~\cite{NowozinMulticut}, due to the latter using the slower COIN-OR CLP~\cite{CoinOrCLP} solver internally, hence we exclude it from the comparison as well.

All solvers were run on a laptop computer with a i5-5200 CPU with 2.2 GHz and 8GB RAM.

\paragraph{Datasets}
We compare on 8 datasets of diverse origin.
\begin{itemize}
\item \texttt{image-seg} consists of images of the Berkeley segmentation dataset~\cite{BerkeleySegmentationDataset}, presegmented with superpixels, for which pairwise affinity values have been computed as in~\cite{ImageSegmentationClosednessAndres}. 
\item The \texttt{knott-3d-\{150|300|450|550\}} datasets come from a neural circuit reconstruction problem of tissue~\cite{ClosedSurfaceSegmentationConnectomics} with $[150]^3$, $[300]^3$, $[450]^3$ and $[900]^3$ voxels. The data is presegmented into supervoxels.
\item \texttt{modularity clustering} aims to cluster a social network into subgroups based on affinity between individual persons.
\item \texttt{CREMI-\{small|large\}} datasets were constructed as part of the CREMI~\cite{CREMI} challenge, which aims to reconstruct neural circuits of the adult fly brain. The images are taken by electron microscopy.
  The \texttt{-small} instances are cropped versions of the \texttt{-large} ones.
To our knowledge, the \texttt{CREMI-large} dataset contain the largest multicut problems approached with LP-based methods.
\end{itemize}
The \texttt{image-seg}, \texttt{knott-3d} and \texttt{modularity clustering} datasets were taken from the OpenGM benchmark~\cite{OpenGMBenchmark},
while the \texttt{CREMI} datasets were kindly provided by their authors and are not yet published.

The dataset consists of 100, 8, 8, 8, 8, 6, 3 and 3 instances, in total 144.
Dataset details can be found in Table~\ref{Table:DatasetResults}.

%\begin{table}
%   \centering
%   \begin{tabular}{cccc}
%      \hline
%      name &\#I & $n$ & $m$ \\ \hline
%      image seg & 100 & 156 - 3764 & 439 -  10970 \\
%      knott-3d-150 & 8 & 572 - 972 & 3381 - 5656 \\
%      knott-3d-300 & 8 & 3846 - 5896 & 23763 - 36221\\
%      knott-3d-450 & 8 & 15150 - 17074 & 94121 - 107060\\
%      knott-3d-550 & 8 & 27286 - 31249 & 173631 - 195271\\
%      \pbox{2cm}{modularity\\ clustering} & 6 & 34 - 115 & 561 - 6555\\
%      \hline
%   \end{tabular}
%   \label{table:DatasetStatistics}
%   \caption{Dataset statistics for benchmark problems. \#I denotes number of problem instances in dataset, $n$ the number of vertices, $m$ the number of edges of graphs.}
%\end{table}

\paragraph{Evaluation}
We have set a timelimit of one hour for all algorithms.
In Table~\ref{Table:DatasetResults} results averaged over all instances in specific datasets are reported.
In Figure~\ref{fig:DatasetRuntimePlot} primal solution energy and dual lower bound (where applicable) averaged over all instances in specific datasets are drawn against runtime.

%Unfortunately, \textbf{MC-ILP},  was not able to finish a single iteration for \texttt{CREMI-large} in under one hour.
%Therefore \textbf{MP-C} and \textbf{MP-COW} with \textbf{KL} as primal solution generator are the only solvers capable of processing all of \texttt{CREMI}.

As can be seen from Table~\ref{Table:DatasetResults}, except for dataset \texttt{CREMI-large}, our solver \textbf{MP-COW} gives dual bounds that are within 0.0045\%, 1.9\%, 0.0061\%, 0.0068\%, 0.0017\%, 0.0007\% and 0.0083\% of the dual lower bound obtained by \textbf{MC-ILP}, which uses the advanced branch-and-cut facilities CPlex~\cite{cplex} provides.
For \texttt{CREMI-large} only our solvers \textbf{MP-C} and \textbf{MP-COW} output dual lower bounds, as \textbf{MC-ILP} did not finish a single iteration after one hour.
As can be seen from Fig.~\ref{fig:DatasetRuntimePlot} our lower bound usually converges faster than \textbf{MC-ILP}'s.
We conjecture that \textbf{MC-C} and \textbf{MP-COW} inside a branch-and-bound solver can significantly extend the reach of exact methods for the multicut problem.

%Our solvers~\textbf{MP-C} and \textbf{MP-COW} give best primal solutions on the large \texttt{knott-3d-450}, \texttt{CREMI-small} and \texttt{CREMI-large}.
%\texttt{knott-3d-150} and \texttt{knott-3d-300} they are not far from the optimum.
%With regard to dual lower bounds, \textbf{MP-C} and \textbf{MP-COW} are best on
%Most strikingly are the increasingly large differences on the \texttt{knott-3d-450}, \texttt{knott-3d-550}, \texttt{CREMI-small} and \texttt{CREMI-large} datasets, where only our solvers \textbf{MP-C} and \textbf{MP-COW} achieved small primal/dual gaps.
%Where \textbf{MC-ILP}'s dual lower bound is better, it is so by a small margin.

Strangely, \textbf{KL} does not perform well on \texttt{image-seg}, even though the lower bound we achieve with \textbf{MP-C} and \textbf{MP-COW} are not far from the optimal lower bounds computed by \textbf{MC-ILP}. 
On the other hand, \textbf{MP-C} and \textbf{MP-COW} give much better dual and primal results for \texttt{modularity-clustering} early on.
Generally, when compared to \textbf{MC-ILP}'s primal convergence, we give much lower values early on, and for the large-scale datasets \texttt{knott-3d-550}, \texttt{CREMI-small}, \texttt{CREMI-large}, \textbf{MCI-ILP}'s primal solutions are not useful anymore.

Unlike \textbf{MC-ILP}, our reparametrized costs can be used to improve heuristic primal algorithms.
An example of this can be seen in Fig.~\ref{fig:PrimalConvergenceExample}, where reparametrized costs improve \textbf{KL}'s solutions.

\begin{table*}
  \centering
  \scriptsize
\begin{tabular}{|lgggcccccc|}
\hline
  \multicolumn{4}{|c}{\texttt{Dataset} / \textbf{Algorithm}} & \textbf{MP-C} & \textbf{MP-COW} & \textbf{CGC} & \textbf{MC-ILP} & \textbf{CC-Fusion-RWS} & \textbf{CC-Fusion-RHC} \\ \hline
   \multirow{3}{*}{ \texttt{image-seg} } &\#I&100&UB& 4730.66  & 4732.66 & 4600.81 &       \textbf{ 4434.91 } & 4447.06 & 4436.33\\
  &\#V&$\leq3764$&LB& 4434.69 & 4434.71 & 4129.70 &       \textbf{ 4434.91} & $\ddagger$ & $\ddagger$\\ 
   &\#E&$\leq10970$&time(s)& 21.92 & 41.35 & \textbf{ 0.14 } &       11.89 & 1.19 & 1.30\\ \hline
   \multirow{3}{*}{ \texttt{\shortstack{modularity\\ clustering}} } &\#I&6&UB& \textbf{ -0.49 } & \textbf{ -0.49 } & -0.30 &       -0.44 & 0.00 & -0.44\\
&\#V&$\leq115$&LB& -0.53 & -0.53 & -0.79 &       \textbf{ -0.52 } & $\ddagger$ & $\ddagger$\\ 
   &\#E&$\leq6555$&time(s)& 0.68 & 0.80 & \textbf{ 0.15 } &       2911.10 & \textbf{ 0.00 } & 17.78\\ \hline
   \multirow{3}{*}{ \texttt{knott-3d-150} } &\#I&8&UB& -4570.61 & -4570.26 & -4220.66 &       \textbf{ -4571.69 } & -4534.76 & -4552.51\\
&\#V&$\leq972$&LB& -4572.65 & -4571.97 & -4855.18 &       \textbf{ -4571.69 } & $\ddagger$ & $\ddagger$\\ 
   &\#E&$\leq5656$&time(s)& 0.84 & 3.81 & \textbf{ 0.04 } &       2.37 & 0.26 & 0.53\\ \hline
   \multirow{3}{*}{ \texttt{knott-3d-300} } &\#I&8&UB& -27285.41 & -27285.15 & -24864.59 &       \textbf{ -27302.78 } & -27242.03 & -27247.29\\
&\#V&$\leq5896$&LB& -27307.36 & -27304.64 & -28901.58 &       \textbf{ -27302.78 } & $\ddagger$ & $\ddagger$\\ 
   &\#E&$\leq36221$&time(s)& 475.63 & 747.81 & \textbf{ 2.73 } &       227.33 & 2.96 & 8.15\\ \hline
   \multirow{3}{*}{ \texttt{knott-3d-450} } &\#I&8&UB& \textbf{ -78426.70 } & \textbf{ -78426.70 } & -70865.27 & -78391.32 & -78386.14 & -78381.06\\
&\#V&$\leq17074$&LB& -78527.23 &  -78523.89  & -83272.85 &       \textbf{ -78522.51 } & $\ddagger$ & $\ddagger$\\ 
   &\#E&$\leq107060$&time(s)& 3649.05 & 3640.66 & \textbf{ 31.56 } &       1840.47 & \textbf{ 16.52 } & 119.23\\ \hline
   \multirow{3}{*}{ \texttt{knott-3d-550} } &\#I&8&UB& -136439.78  &-136439.78  & -123841.47 &       -135766.90 & \textbf{ -136464.05 } & -136395.89\\
&\#V&$\leq31249$&LB& -136814.88 & -136803.34  & -144703.64 &       \textbf{ -136755.36 } & $\ddagger$ & $\ddagger$\\ 
   &\#E&$\leq195271$&time(s)& 3783.33 & 3745.76 & \textbf{ 102.42 } &       3683.22 & \textbf{ 72.94 } & 594.60\\ \hline
   \multirow{3}{*}{ \texttt{CREMI-small} } &\#I&3&UB& \textbf{ -213167.45 } & \textbf{ -213167.45 } & -194616.60 &       -209594.49 & -168905.17 & -213117.84\\
&\#V&$\leq35523$&LB& -213225.65  & -213225.67 & -215473.98 &       \textbf{ -213208.94 } & $\ddagger$ & $\ddagger$\\ 
   &\#E&$\leq235966$&time(s)& 3645.51 & 3661.71 & \textbf{ 319.01 } &       2775.81 & 3543.61 & 2555.48\\ \hline
   \multirow{3}{*}{ \texttt{CREMI-large} } &\#I&3&UB& \textbf{ -3886840.98 } & \textbf{ -3886840.98 } & $\dagger$ &       $\dagger$ & -3772597.37 & -3619190.20\\
&\#V&$\leq623435$&LB& \textbf{ -3892753.21 } & -3893090.30 & $\dagger$ &       $\dagger$ & $\ddagger$ & $\ddagger$\\ 
   &\#E&$\leq4172314$&time(s)& \textbf{ 3667.67 } & 3806.33 & $\dagger$ & $\dagger$ & 5978.08 & 23139.40\\ \hline
\end{tabular}
\caption{
  Primal solution energy (UB)/dual lower bound (LB)/runtime in seconds averaged over all instances of datasets.
  \#I means number of instances in dataset, \#V and \#E mean number of vertices and edges in multicut instances.
  $\dagger$ signifies method did not finish one iteration after one hour, so was excluded from comparison.
  $\ddagger$ means method does not output dual lower bound.
  \textbf{Bold} numbers signify lowest primal solution energy, highest lower bound, fastest runtime.
  }
  \label{Table:DatasetResults}
\end{table*}

\begin{figure*}%
\centering%
\begin{minipage}{0.22\textwidth}%
\vspace{0pt}%
  \includegraphics[width=1.5\textwidth]{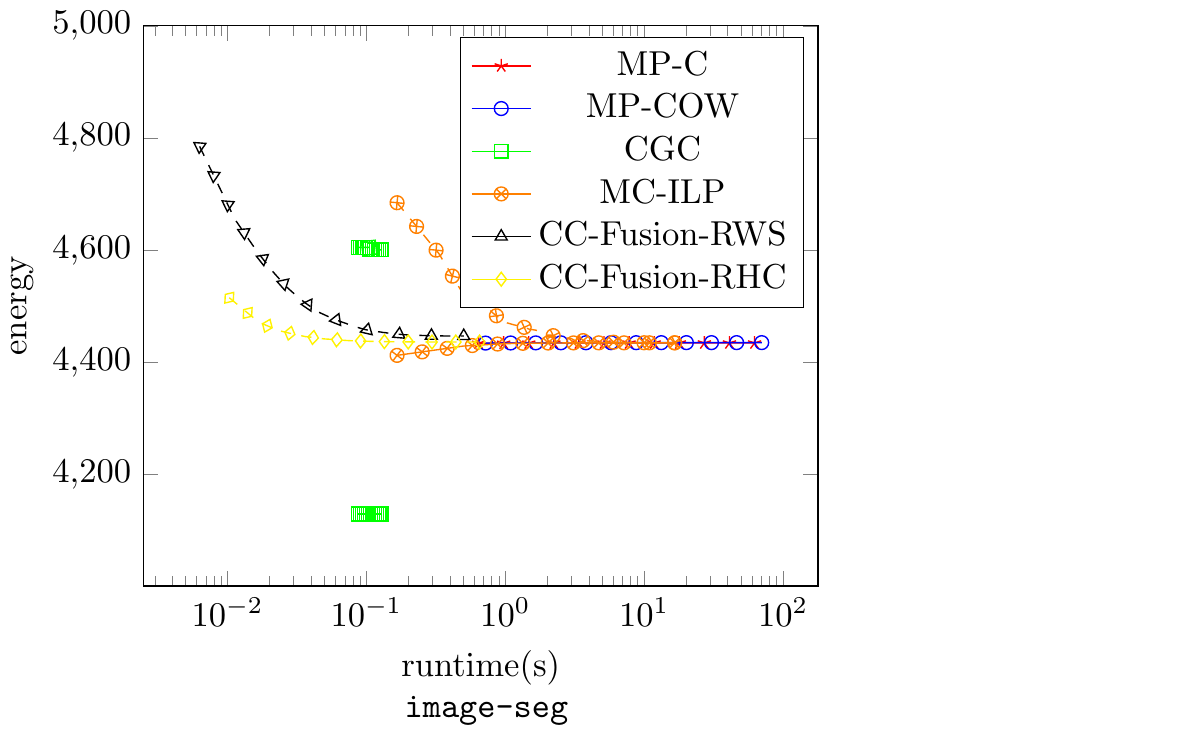}%
%  \includestandalone[width=1.5\textwidth]{multicut_dataset_runtime_plot_image-seg}%
\end{minipage}%
%\hspace*{0.8cm}%
\hfill
\begin{minipage}{0.22\textwidth}%
\vspace{0pt}%
  \includegraphics[width=1.5\textwidth]{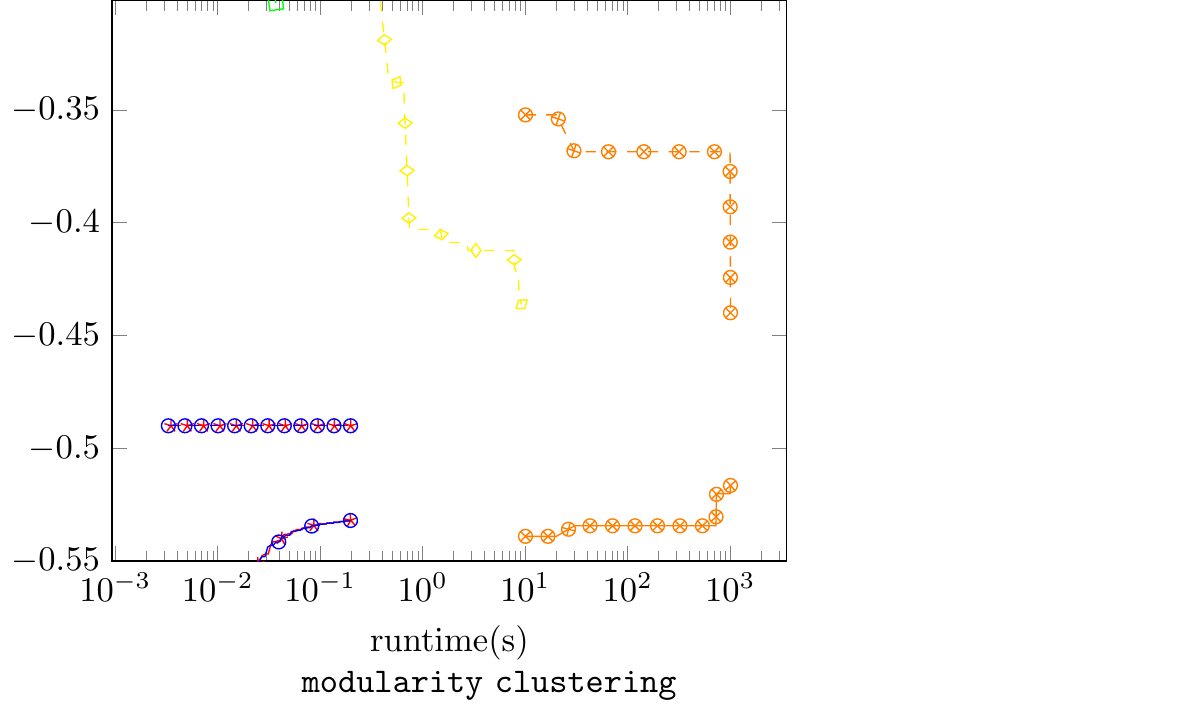}%
%  \includestandalone[width=1.5\textwidth]{multicut_dataset_runtime_plot_modularity_clustering}%
\end{minipage}%
%\hspace*{0.4cm}%
\hfill
\begin{minipage}{0.22\textwidth}%
\vspace{0pt}%
  \includegraphics[width=1.5\textwidth]{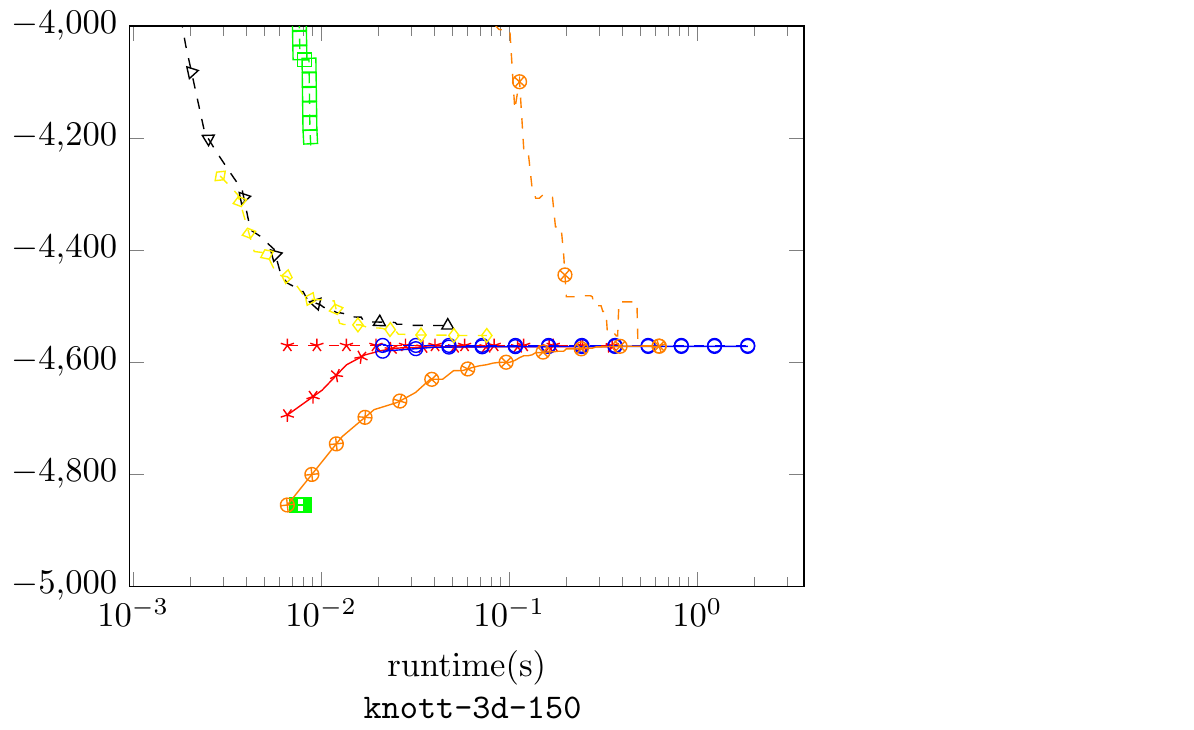}
%  \includestandalone[width=1.5\textwidth]{multicut_dataset_runtime_plot_knott-3d-150}%
\end{minipage}%
\hfill
\begin{minipage}{0.22\textwidth}%
\vspace{0pt}%
  \includegraphics[width=1.5\textwidth]{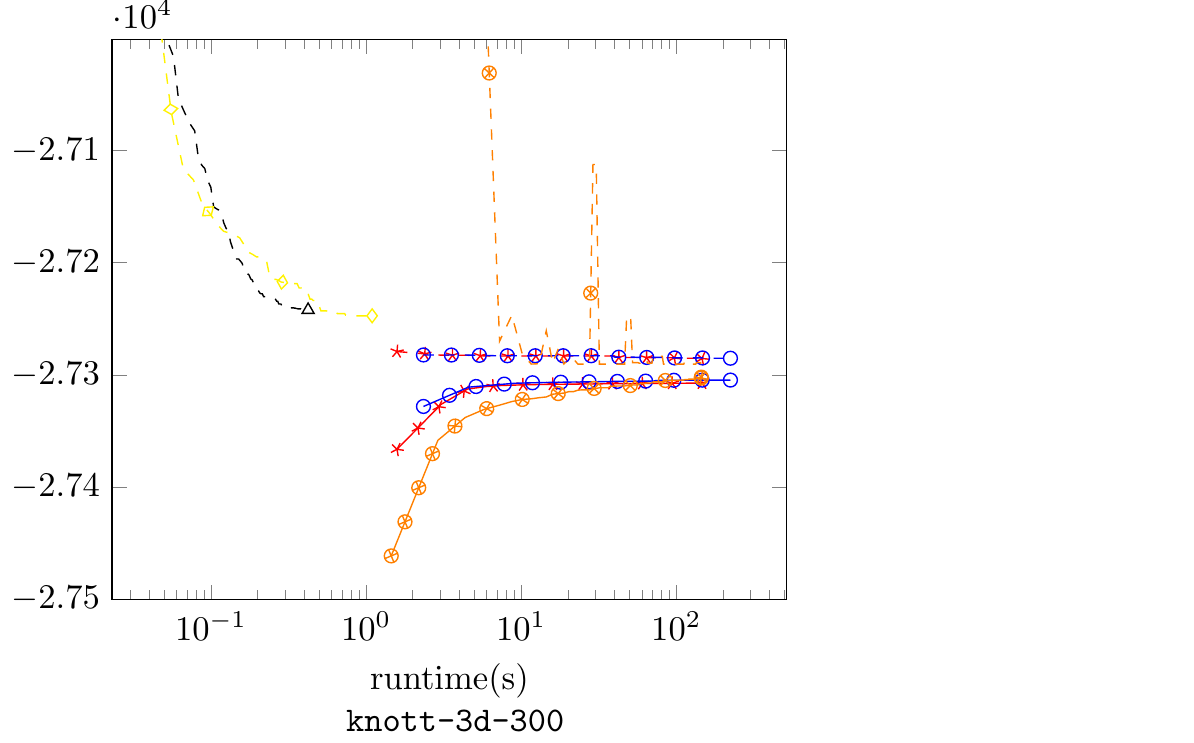}%
%  \includestandalone[width=1.5\textwidth]{multicut_dataset_runtime_plot_knott-3d-300}%
\end{minipage}%
\\
%\hspace*{0.8cm}%
\begin{minipage}{0.22\textwidth}%
\vspace{0pt}%
  \includegraphics[width=1.5\textwidth]{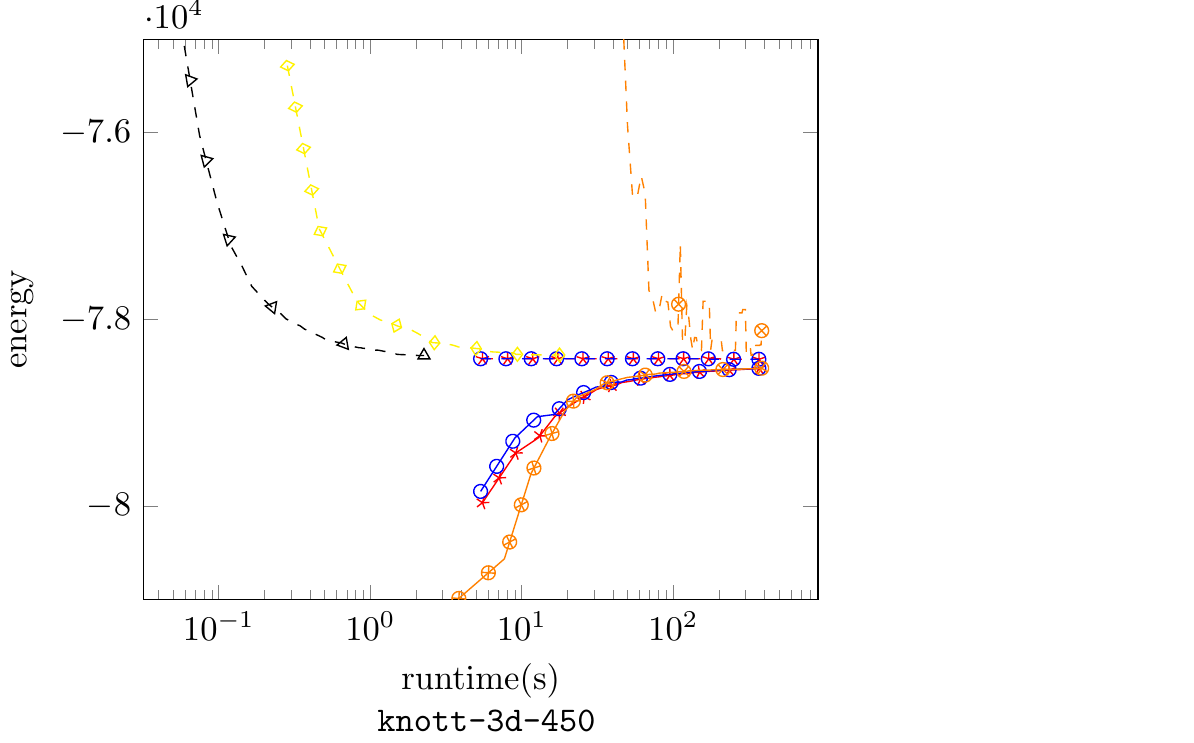}%
%  \includestandalone[width=1.5\textwidth]{multicut_dataset_runtime_plot_knott-3d-450}%
\end{minipage}%
%\hspace*{0.4cm}%
\hfill
\begin{minipage}{0.22\textwidth}%
\vspace{0pt}%
  \includegraphics[width=1.5\textwidth]{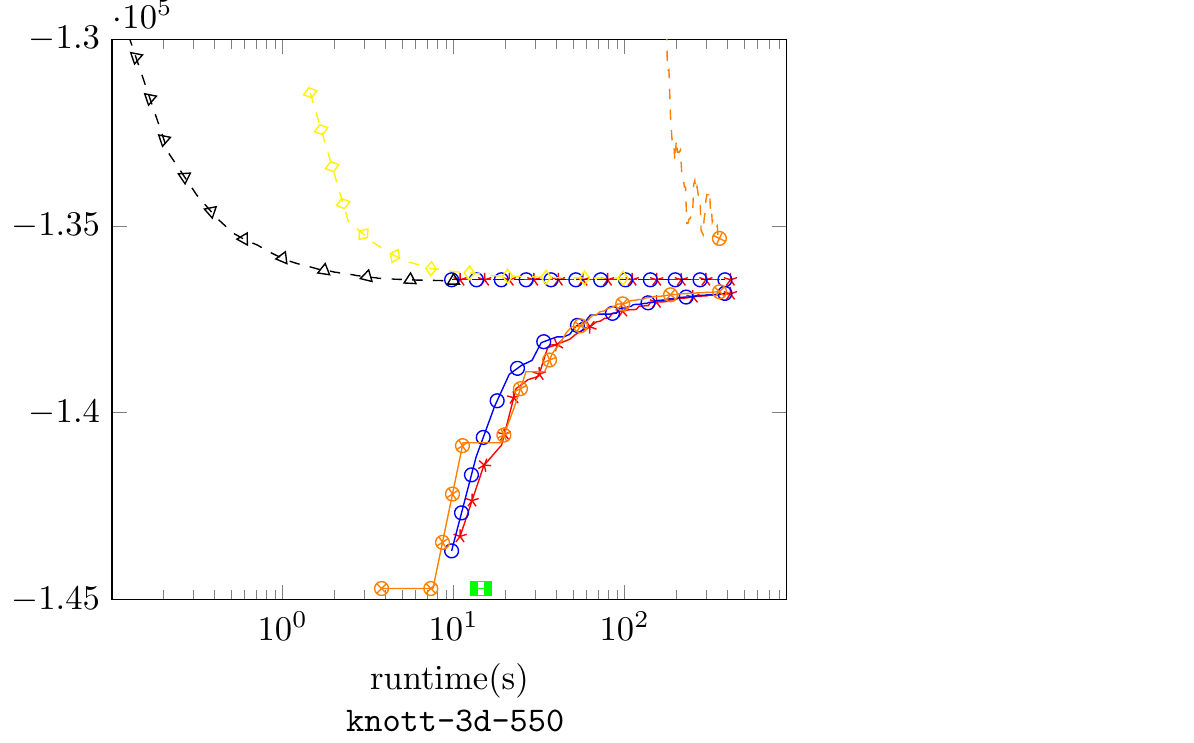}%
%  \includestandalone[width=1.5\textwidth]{multicut_dataset_runtime_plot_knott-3d-550}%
\end{minipage}%
\hfill
\begin{minipage}{0.22\textwidth}%
\vspace{0pt}%
  \includegraphics[width=1.5\textwidth]{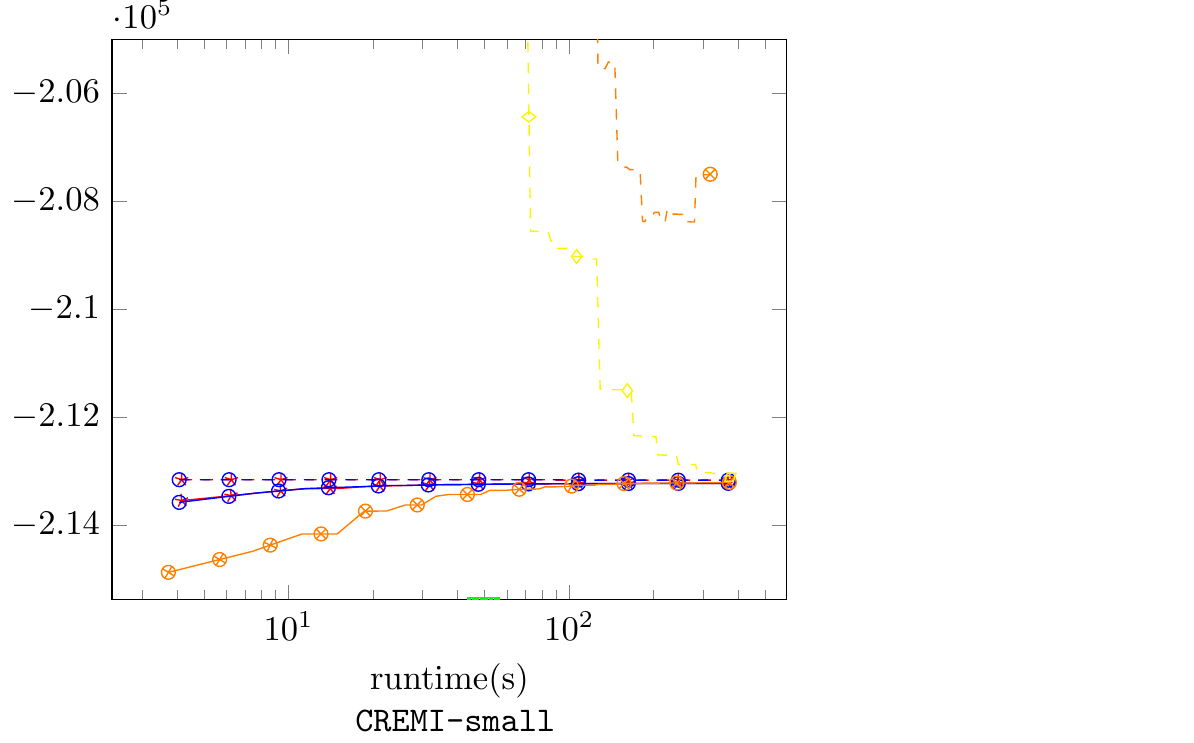}%
%  \includestandalone[width=1.5\textwidth]{multicut_dataset_runtime_plot_CREMI-small}%
\end{minipage}%
%\hspace*{0.8cm}%
\hfill
\begin{minipage}{0.22\textwidth}%
\vspace{0pt}%
  \includegraphics[width=1.5\textwidth]{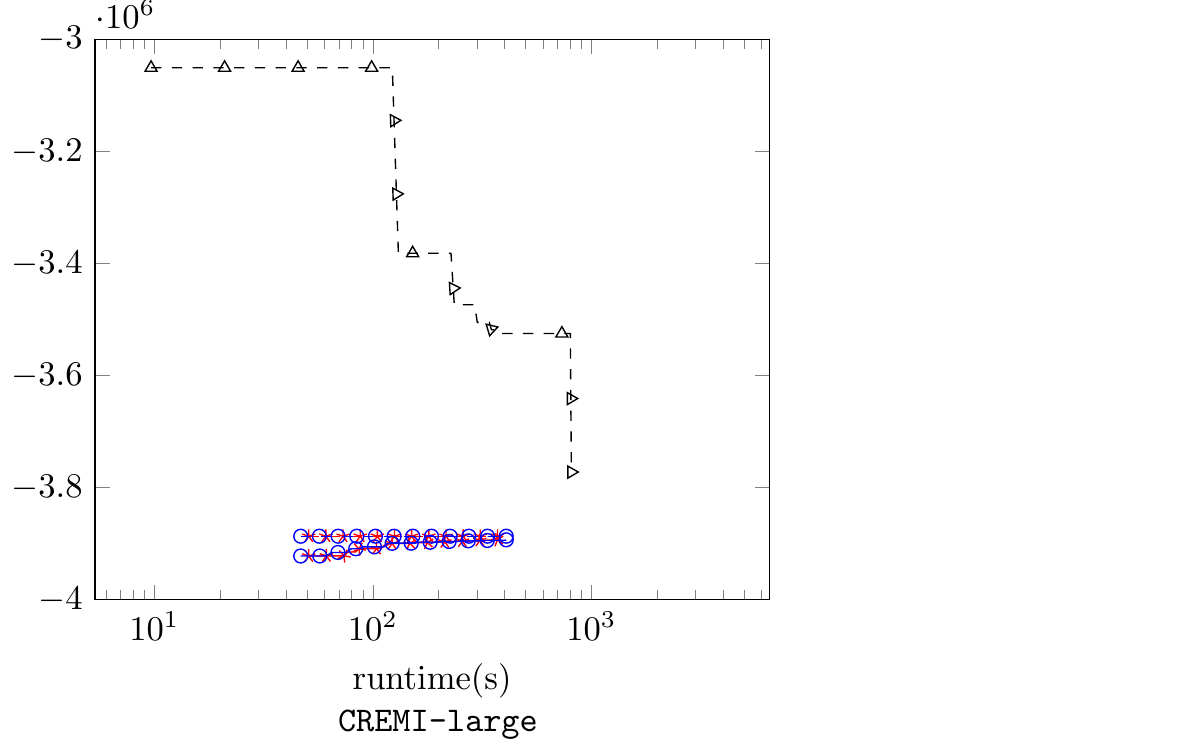}%
%  \includestandalone[width=1.5\textwidth]{multicut_dataset_runtime_plot_CREMI-large}%
\end{minipage}%
%\hspace*{0.4cm}%
%\begin{minipage}{0.3\textwidth}%
%\vspace{0pt}%
%  \includegraphics[width=1.5\textwidth]{plots/multicut_dataset_runtime_plot_knott-3d-550}%
%  \includegraphics[width=1.5\textwidth]{plots/multicut_dataset_runtime_plot_knott-3d-550}%
%\end{minipage}%
  \caption{
    Averaged runtime plots for \texttt{image-seg}, \texttt{modularity clustering}, \texttt{knott-3d-150}, \texttt{knott-3d-300}, \texttt{knott-3d-450}, \texttt{knott-3d-550}, \texttt{CREMI-small} and \texttt{CREMI-large} datasets. 
 Continuous lines denote dual lower bounds and dashed ones primal energies. 
  Values are averaged over all instances of the dataset.
  The x-axis is logarithmic.
  }
\label{fig:DatasetRuntimePlot}
\end{figure*}

\begin{SCfigure}
\centering%
\includegraphics[width=0.35\textwidth]{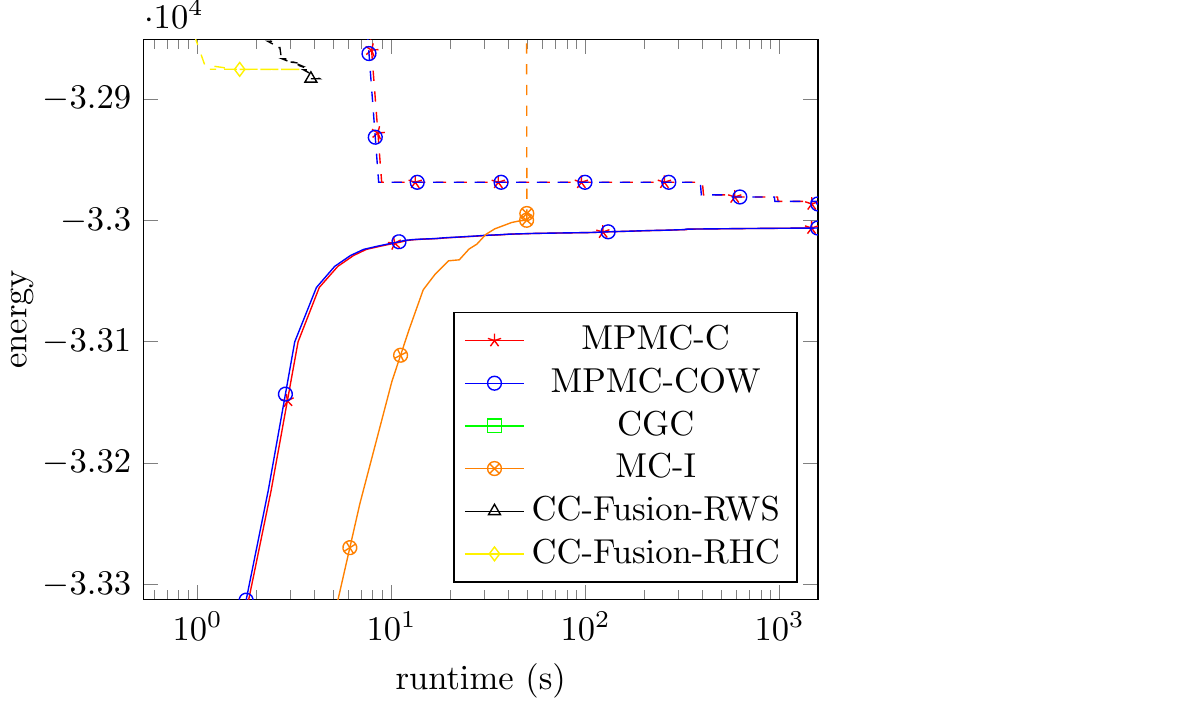}%
%\includestandalone[width=0.35\textwidth]{runtime_knott-3d-300_gm_knott_3d_072}%
  \hspace*{-1.5cm}\caption{Instance \texttt{gm\_knott\_3d\_072} from dataset \texttt{knott-3d-300} where reparametrized costs improve \textbf{KL}'s solutions.}
\label{fig:PrimalConvergenceExample}
\end{SCfigure}

\paragraph{Conclusion}
We have shown that LP-based methods are feasible for solving large scale multicut problems on commodity hardware and one does not have to resort to heuristic primal algorithms.
We achieve dual bounds very close to those computed by state-of-the-art branch-and-cut solvers.
%While being comparable or slightly worse on small multicut instances in comparison to state-of-the-art ILP solvers, our algorithm is the only one providing dual bounds for truly large scale multicut problems.
Additionally, our method usually gives much faster dual bound convergence, resulting in superior solutions when terminated early.
Also the primal heuristic GAEC + KLj can be improved when run on costs as computed by our method.

It remains an interesting task to integrate primal heuristics more tightly into our message passing approach and further improve the dual lower bound by e.g. embedding our solver into branch and cut.% and considering a larger class of multicut inequalities.

\section{Acknowledgments}
The authors would like to thank Vladimir Kolmogorov for helpful discussions.
This work is partially funded by the European Research Council under the European Unions Seventh Framework Programme (FP7/2007-2013)/ERC grant agreement no 616160.

%\clearpage
\FloatBarrier

{\small
\bibliographystyle{ieee}
\bibliography{literatur}
}

\clearpage
\section{Apendix}
\newtheorem*{proposition*}{Proposition}
\subsection{Proofs}
\paragraph{Proof of Proposition~\ref{prop:CycleCharacterization}}

\begin{proposition*}
Let $C = \{e_1,\ldots,e_k\}$ be a cycle with $\theta_{e_1} \leq -\epsilon$ and $\theta_{e_l} \geq \epsilon$ for $l > 1$.
Then, the dual lower bound $L(\theta)$ can be increased by $\epsilon$ by including a triangulation of $C$.
\end{proposition*}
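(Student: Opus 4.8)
The plan is to localize the argument to the subproblem formed by the edges $e_1, \ldots, e_k$ together with the triangles introduced by the triangulation, since every message-passing update~\eqref{eq:update-rule-1}--\eqref{eq:update-rule-2} used here touches only these factors and therefore leaves the contributions of all other subproblems to $L(\theta)$ unchanged. Before triangulation the edges are independent subproblems with $X_{e_l} = \{0,1\}$, so they contribute $\sum_{l=1}^k \min(0, \theta_{e_l})$ to $L(\theta)$; by hypothesis $\theta_{e_1} \leq -\epsilon < 0$ and $\theta_{e_l} \geq \epsilon > 0$ for $l > 1$, so this contribution equals $\theta_{e_1}$. The goal is to show that, after adding the (initially zero-cost) triangle factors, message passing can raise the joint contribution of these factors to the integer optimum of the multicut problem restricted to $C$, and that this integer optimum exceeds $\theta_{e_1}$ by at least $\epsilon$.

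First I would invoke Remark~1: triangulating $C$ builds a junction tree, so the LP relaxation over the triangulated cycle is tight, i.e.\ its optimum equals the minimum cost of a feasible multicut of $C$. By LP duality applied to this self-contained subproblem (the maximum of $L$ over reparametrizations equals the primal LP optimum), message passing restricted to the cycle edges and triangles can drive the joint lower-bound contribution up to exactly this minimum cost. It therefore remains to evaluate that minimum and compare it with the pre-triangulation value $\theta_{e_1}$.

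Next I would compute the minimum by a case split on $x_{e_1}$, using that the feasible multicuts of a cycle are precisely those cutting $0$ or at least $2$ of its edges (the cycle inequality forbids cutting exactly one). If $x_{e_1} = 0$, the cheapest feasible labeling is the empty cut, of cost $0$, since every other edge has nonnegative cost. If $x_{e_1} = 1$, feasibility forces at least one further edge to be cut, and the cheapest choice cuts exactly one additional edge, giving cost $\theta_{e_1} + \min_{l > 1}\theta_{e_l}$. Hence the attainable bound is $\min\{0,\ \theta_{e_1} + \min_{l>1}\theta_{e_l}\}$, and the increase over $\theta_{e_1}$ equals $\min\{-\theta_{e_1},\ \min_{l>1}\theta_{e_l}\}$, which is at least $\epsilon$ by the sign-and-magnitude hypotheses on the costs.

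The main obstacle is the second step: rigorously certifying that the bound increase is genuinely realizable by the update rules rather than merely by an abstract LP reparametrization. I would either make the localized duality argument precise---verifying that the triangulated cycle forms an independent block, so that raising its contribution cannot depress any edge factor below the $\theta_{e_1}$ already accounted for---or, more concretely, exhibit an explicit schedule of messages~\eqref{eq:update-rule-1}--\eqref{eq:update-rule-2} that moves $\epsilon$ of cost from $e_1$ through the triangle chain and shows directly that each local minimum in $L(\theta)$ rises accordingly. The explicit schedule is more tedious but avoids having to quote the tightness of the junction-tree LP as a black box.
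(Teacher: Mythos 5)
Your proposal is correct and follows essentially the same route as the paper's proof: both localize the argument to the cycle, record that the pre-triangulation contribution is $\theta_{e_1}=\sum_l\min(0,\theta_{e_l})$, invoke LP duality to identify the best reparametrized lower bound with the optimum of the local LP over the triangulated cycle, and finish with the same case analysis on $x_{e_1}$ showing that optimum is $\min\{0,\ \theta_{e_1}+\min_{l>1}\theta_{e_l}\}\geq\theta_{e_1}+\epsilon$. The only divergence is the middle step, where you quote junction-tree tightness of the triangulated relaxation while the paper gets by with the weaker observation that the triangle factors imply the single cycle inequality (so it lower-bounds rather than exactly computes the local LP value); the realizability-by-message-passing issue you flag as the main obstacle is left at exactly the same level of informality in the paper, which simply cites LP duality for $\max_\theta L_C(\theta)$.
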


\begin{proof}
Let cycle $C$ have vertices $\{v_1,\ldots,v_k\}$ and assume that $e_i \ v_i v_{i+1}$ with $k+1 = 1$ for notational purposes.
After triangulation, triangle factors on vertices $v_1v_2v_3,\ldots v_1v_{k-1}v_k$ will be present in the model.
Let the current reparametrization be $\theta$.

The triangle factors corresponding to cycle $C$ will enforce the cycle inequality~\eqref{eq:mp-cycles}
\begin{equation}\label{eq:ViolatedCycle}
x_{e_1} \leq \sum_{i=1,\ldots,k} x_{e_i}\,.
\end{equation}
It holds that
\begin{multline}
- \theta_{e_1} = \min_{x_{e_1},\ldots,x_{e_k} \in [0,1]} \sum_{i=1}^k \theta_{e_i} x_{e_i} \\
\leq -\epsilon +\min_{x_{e_1},\ldots,x_{e_k} \in [0,1]} \sum_{i=1}^k \theta_{e_i} x_{e_i} \text{ s.t.~\eqref{eq:ViolatedCycle}} \\
\leq \epsilon + \max_{\left\{ 
 \substack{\theta_{e_1},\ldots,\theta_{e_k} \\ \theta_{v_1v_2v_3},\ldots,\theta_{v_1v_{k-1}v_k} \\ \text{a reparametrization} } 
\right\} } L_C(\theta) 
\end{multline}
where $L_C(\theta) = \sum_{i=1}^k \min(0,\theta_{e_i}) + \sum_{i=2}^{k-1} \min\{\theta_{v_1v_iv_{i+1}}\}$ the dual lower bound on cycle $C$.
The first inequality above is due to either $x_{e_1} = 0$ in the optimal solution or one $x_{e_2},\ldots,x_{e_k}$ being one due to~\eqref{eq:ViolatedCycle}.
The second inequality is due to the fact that 
(i)~$\max_{\theta \text{ a reparametrization}} L(\theta) =\ min_{\mu \in \Lambda} \la \theta,\mu\ra$ by linear programming duality and
(ii)~the triangle factors enforce more inequalities than only~\eqref{eq:ViolatedCycle}.
\end{proof}

\paragraph{Proof of Proposition~\ref{prop:OddWheelCharacterization}}
\begin{proposition*}
Let $O$ an odd wheel with center node $u$ and cycle nodes $v_1,\ldots,v_k$. 
Adding the lollipop subproblems for $O$ increases $L(\theta)$ by at least $\epsilon$ if the costs $\theta_{u v_i v_{i+1}}$ of each triangle $uv_iv_{i+1}$ are such that the minimal cost of any edge labeling of the triangle cutting precisely one edge incident to $u$ is smaller by $\epsilon$ than the minimal cost of any edge labeling of the triangle cutting $0$ or $2$ edges incident to $u$.
That is:
\begin{align}
& \min_{\{x: \, x_{uv_i} +x_{uv_{i+1}} = 1\}} \theta_{u v_i v_{i+1}}(x) + \epsilon \nonumber\\
\leq \quad & \min_{\{x: \, x_{uv_i} +x_{uv_{i+1}} \neq 1\}} \theta_{u v_i v_{i+1}}(x)
\enspace .
\end{align}
\end{proposition*}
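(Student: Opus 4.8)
The plan is to follow the template of the proof of Proposition~\ref{prop:CycleCharacterization} as closely as possible, replacing the single cycle inequality by the odd wheel inequality~\eqref{eq:OddWheelConstraint} and the two-case analysis by a parity argument. Abbreviate the two quantities compared in the hypothesis~\eqref{eq:OddWheelTriangleTest} by $s_i := \min_{x_{uv_i}+x_{uv_{i+1}}=1}\theta_{uv_iv_{i+1}}(x)$ and $n_i := \min_{x_{uv_i}+x_{uv_{i+1}}\neq 1}\theta_{uv_iv_{i+1}}(x)$, so that $s_i+\epsilon\leq n_i$. When each rim triangle $uv_iv_{i+1}$ is minimized on its own, it contributes $\min(s_i,n_i)=s_i$ to $L(\theta)$, so the current value carries the term $\sum_{i=1}^k s_i$. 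By the linear programming duality recalled in Sec.~\ref{sec:message-passing}, the best bound reachable by message passing on the enlarged model equals $\min_{\mu}\la\theta,\mu\ra$ over the polytope cut out by the triangle and lollipop factors. Since this triangulation enforces~\eqref{eq:OddWheelConstraint} (just as a triangulation of a cycle enforces its cycle inequality), it suffices to bound that constrained minimum below by $\sum_i s_i+\epsilon$.

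The key step is the parity argument. By hypothesis each rim triangle strictly prefers, by $\epsilon$, a labeling that cuts exactly one edge incident to $u$; for a valid triangle multicut such a \emph{split} labeling forces $x_{uv_i}\neq x_{uv_{i+1}}$ and $x_{v_iv_{i+1}}=1$. If every triangle were split, the binary labels $x_{uv_1},\ldots,x_{uv_k}$ would alternate all the way around the cycle, and summing $x_{uv_i}+x_{uv_{i+1}}=1$ over $i=1,\ldots,k$ (with $v_{k+1}:=v_1$) would give $2\sum_i x_{uv_i}=k$, impossible for odd $k$. Equivalently the all-split point has $\sum_i x_{v_iv_{i+1}}=k$ and $\sum_i x_{uv_i}=k/2$, so it violates~\eqref{eq:OddWheelConstraint}, whose right-hand side is $\lfloor k/2\rfloor=(k-1)/2$. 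Hence the all-split configuration is infeasible once the factors are present, every feasible integer labeling has at least one non-split triangle, that triangle contributes at least $n_i\geq s_i+\epsilon$ and the rest at least $s_i$, and the total is at least $\sum_i s_i+\epsilon$. This proves the statement for the integer optimum and in particular certifies the separation criterion.

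The main obstacle is upgrading this from the integer optimum to the dual bound $L(\theta)$, which tracks the LP relaxation. The parity defect can be spread fractionally: a point respecting triangle consistency and~\eqref{eq:OddWheelConstraint} can distribute the single unavoidable unit of non-split weight evenly across all $k$ triangles, so that only a fraction of $\epsilon$ (of order $\epsilon/k$ in the symmetric cost model) is actually gained, and one checks that the extra coupling through the shared edge $uv_1$ in the fan triangles and lollipops does not by itself remove this fractional solution. Thus the clean $+\epsilon$ increase is naturally an integer-certifying statement — exactly the strictly improving direction the cutting-plane separator needs — while the LP value may rise by strictly less. To recover the full $\epsilon$ for $L(\theta)$ itself I would have to identify the additional valid inequalities that, together with~\eqref{eq:OddWheelConstraint}, make the relaxation of the odd-wheel subproblem integral, or else construct an explicit sequence of messages on the lollipop factors certifying the gain directly; pinning down that extra polyhedral content, rather than the parity argument, is the step I expect to be hardest.
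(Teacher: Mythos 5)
Your parity argument is exactly the paper's: condition~\eqref{eq:OddWheelTriangleTest} forces every rim triangle to prefer a ``split'' labeling by $\epsilon$, and for odd $k$ the split labelings cannot be combined into a consistent labeling of the wheel, so any integral labeling pays at least $\epsilon$ on some triangle. The gap you flag at the end, however, is not an optional refinement -- it is the second half of the paper's proof, and you leave it open. The proposition is a claim about $L(\theta)$, i.e.\ about the optimum of the relaxation over the added factors, so without ruling out the fractional ``spread the parity defect over all $k$ triangles'' solution you have only certified the integer optimum, not the stated increase of the dual bound.

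The paper closes this by asserting that the lollipop construction ``ensures exactness on odd wheels,'' and the reason is structural: the subproblems added for the wheel are the fan triangles $uv_1v_2,\dots,uv_1v_k$ and the lollipops $(uv_jv_{j+1},v_1)$, every one of which contains the chord $uv_1$. Ordered as $uv_1v_2,\ (uv_2v_3,v_1),\ uv_1v_3,\ (uv_3v_4,v_1),\dots,uv_1v_k$, consecutive factors overlap in $\{uv_1,uv_j\}$ and each edge variable occurs in a contiguous block, so the factors form a junction tree (a path) satisfying the running intersection property. On a junction tree local pairwise consistency implies global consistency, the local polytope coincides with the convex hull of integral labelings of the wheel, and by LP duality the best reparametrization's lower bound equals the integer minimum -- which your parity argument already bounds below by $\sum_i s_i + \epsilon$. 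This is precisely why the extra edge $uv_1$ is carried along in every lollipop: without it the factor graph would contain a cycle of length $k$ and your fractional counterexample gaining only $O(\epsilon/k)$ would be a real obstruction. So the missing step is not ``additional valid inequalities'' but the observation that the chosen triangulation is already tree-structured; you correctly located the hard point but did not supply the argument that resolves it.
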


\begin{proof}
Condition~\eqref{eq:OddWheelTriangleTest} means that in all triangles in the odd wheel $O$, the minimal assignment with regard to the current reparametrization, has exactl one edge incident to $u$.
All other assignment have cost greater by at least $\epsilon$.
As $k$ is odd, there is no possiblity to combine those local assignments to a global assignment on $O$.

On the other hand, our construction of lollipop factors ensures exactness on odd wheels. As at least one triangle must then be assigned costs that are not locally optimal and which is larger by $\epsilon$ than its minimal reparametrized cost, the result follows.
\end{proof}

\paragraph{Proof of Proposition~\ref{prop:MulticutComplementarySlackness}}
\begin{proposition*}
Assume $\theta$ maximizes the dual lower bound $L(\theta)$ and the relaxation is tight, i.e.
\begin{align}
L(\theta)
= \min_{\{x \in \{0,1\}^E \,|\, x^{-1}(1) \in \mathcal{M}_G\}} \la \theta, x \ra
\enspace .
\end{align}
Moreover, let $\hat x \in \{0,1\}^E$ such that $\hat x^{-1}(1)$ is an optimal multicut of $G$. Then,
\begin{align}
\theta_e \begin{cases} 
	\leq 0 & \textnormal{if}\ \hat x_e = 1 \\ 
	\geq 0, & \textnormal{if}\ \hat x_e = 0 
\end{cases}
\end{align}
\end{proposition*}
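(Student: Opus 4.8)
The plan is to exploit the separable structure of the dual lower bound together with the objective-invariance of message passing established earlier. First I would recall that the bound decomposes as a sum of independent subproblem minima, $L(\theta) = \sum_{j \in \BV} \min_{x_j \in X_j} \la \theta_j, x_j \ra$, and that the optimal multicut $\hat x$ induces, on each subproblem $j$, a feasible integer assignment $\hat x_j \in X_j$; on an edge subproblem $e$ this induced assignment is simply $\hat x_e \in \{0,1\}$. Because message passing leaves the cost of every feasible primal point unchanged, the total reparametrized cost $\sum_j \la \theta_j, \hat x_j \ra$ of $\hat x$ equals the value $\la \theta, \hat x \ra$ appearing in the tightness hypothesis.

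Next I would set up the chain $L(\theta) = \sum_j \min_{x_j \in X_j} \la \theta_j, x_j \ra \leq \sum_j \la \theta_j, \hat x_j \ra = \la \theta, \hat x \ra$, where each summand on the left is at most the corresponding summand on the right simply because a minimum over $X_j$ cannot exceed the value attained at the particular feasible point $\hat x_j \in X_j$. The tightness assumption forces the two ends of this chain to coincide, so the nonnegative slacks $\la \theta_j, \hat x_j \ra - \min_{x_j \in X_j} \la \theta_j, x_j \ra$ must all vanish. Hence $\hat x_j$ is a minimizer of every subproblem $j$; this per-subproblem local optimality is the complementary-slackness heart of the argument.

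Finally I would specialize to an edge subproblem $e$, where $X_e = \{0,1\}$ and the minimum equals $\min(0,\theta_e)$. Local optimality of $\hat x_e$ then reads $\theta_e \hat x_e = \min(0,\theta_e)$. If $\hat x_e = 1$ this equation gives $\theta_e = \min(0,\theta_e)$, hence $\theta_e \leq 0$; if $\hat x_e = 0$ it gives $\min(0,\theta_e) = 0$, hence $\theta_e \geq 0$. These are exactly the two cases of the claimed sign condition, so the proof concludes.

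The step I expect to demand the most care is the bookkeeping that transfers the global tightness hypothesis to local optimality: one must verify that $\hat x$ restricted to each subproblem is genuinely a feasible integer point of that subproblem, so that the per-summand inequality is valid, and that the total cost is preserved under reparametrization. Both ingredients are already in hand — feasibility from the construction of the subproblems and their coupling constraints, invariance from the objective-preservation of message passing derived above — so the remaining work reduces to the elementary observation that a sum of nonnegative terms equal to zero forces each term to vanish.
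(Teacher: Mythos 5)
Your argument is correct and is exactly the complementary-slackness reasoning that the paper's proof invokes in a single line: you write the separable bound $L(\theta)=\sum_{j\in\BV}\min_{x_j\in X_j}\la\theta_j,x_j\ra$, sandwich it against the (reparametrization-invariant) cost of the optimal multicut, use the zero gap to force each subproblem to be minimized by the restriction of $\hat x$, and read off $\theta_e\hat x_e=\min(0,\theta_e)$ for the edge subproblems, which is the claimed sign condition. You have simply spelled out the details that the paper compresses into the phrase ``follows from complementary slackness,'' including the one point worth checking, namely that the restriction of a multicut to each triangle or lollipop indeed lies in that subproblem's feasible set.
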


\begin{proof}
Follows from the complementary slackness conditions in linear programming duality.
\end{proof}

\subsection{Detailed experimental evaluation}
In Table~\ref{table:instance-evaluation} a detailed per instance evaluation of all algorithms considered in the experimental section can be found. 

{\onecolumn\footnotesize
\begin{longtable}{|llcccccc|}
  \caption{
    Per instance evaluation of the considered eight datasets.
    UB means primal solution energy, LB dual lower bound and runtime(s) the runtime in seconds. 
    \textbf{Bold} numbers indicate lowest primal energy, highest lower bound and smallest runtime.
  $\dagger$ means method not applicable.
}\\
  \label{table:instance-evaluation}
%\hline
  \texttt{Instance} & & \textbf{MP-C} & \textbf{MC-COW} & \textbf{CGC} & \textbf{MC-ILP} & \textbf{CC-Fusion-RWS} & \textbf{CC-Fusion-RHC} \\ \hline \endhead
\multicolumn{ 8 }{| c |}{ \textbf{ image-seg } } \\ \hline 
%  \multirow{3}{*}{ 101085.bmp } & UB & \textbf{ 5420.82 } & 5425.67 & \textbf{ 0.00 } & \textbf{ 0.00 } & \textbf{ 0.00 } & \textbf{ 0.00 } \\
%      & LB & 5197.96 & \textbf{ 5199.62 } & 0.00 & 0.00 & 0.00 & 0.00 \\ 
%      & runtime(s) & \textbf{ 1.92 } & 25.05 & \textbf{ 0.00 } & \textbf{ 0.00 } & \textbf{ 0.00 } & \textbf{ 0.00 } \\ \hline
   \multirow{3}{*}{ 101087.bmp } & UB & 2906.16 & 2906.16 & 2853.56 & \textbf{ 2789.90 } & 2800.22 & \textbf{ 2789.90 } \\*
      & LB & 2788.69 & 2788.95 & 2622.38 & \textbf{ 2789.90 } & $\dagger$ & $\dagger$ \\*
      & runtime(s) & 0.31 & 0.59 & \textbf{ 0.02 } & 5.11 & 0.63 & 0.81 \\ \hline
   \multirow{3}{*}{ 102061.bmp } & UB & 3017.57 & 3017.57 & 3090.33 & \textbf{ 2943.77 } & 2963.42 & 2944.46 \\*
      & LB & 2932.80 & 2933.39 & 2750.99 & \textbf{ 2943.77 } & $\dagger$ & $\dagger$ \\*
      & runtime(s) & 3.26 & 5.83 & \textbf{ 0.05 } & 8.75 & 0.61 & 0.96 \\ \hline
   \multirow{3}{*}{ 103070.bmp } & UB & 4437.16 & 4444.27 & 4457.13 & \textbf{ 4199.38 } & 4205.03 & 4200.64 \\*
      & LB & 4196.88 & 4196.94 & 3842.84 & \textbf{ 4199.38 } & $\dagger$ & $\dagger$ \\*
      & runtime(s) & 11.84 & 15.93 & \textbf{ 0.15 } & 6.97 & 1.32 & 0.98 \\ \hline
   \multirow{3}{*}{ 105025.bmp } & UB & 6332.73 & 6333.88 & 6290.71 & \textbf{ 6055.33 } & 6070.84 & 6061.05 \\*
      & LB & 6045.21 & 6046.08 & 5506.01 & \textbf{ 6055.33 } & $\dagger$ & $\dagger$ \\*
      & runtime(s) & 33.75 & 55.69 & \textbf{ 0.35 } & 32.15 & 1.91 & 1.59 \\ \hline
   \multirow{3}{*}{ 106024.bmp } & UB & 1832.16 & 1832.16 & 1654.27 & \textbf{ 1599.25 } & 1618.18 & 1599.83 \\*
      & LB & 1597.07 & 1597.21 & 1466.60 & \textbf{ 1599.25 } & $\dagger$ & $\dagger$ \\*
      & runtime(s) & 0.64 & 4.09 & \textbf{ 0.02 } & 4.76 & 0.18 & 0.25 \\ \hline
   \multirow{3}{*}{ 108005.bmp } & UB & 6839.76 & 6841.08 & 6855.33 & \textbf{ 6578.03 } & 6584.62 & 6578.18 \\*
      & LB & 6567.48 & 6569.63 & 6151.29 & \textbf{ 6578.03 } & $\dagger$ & $\dagger$ \\*
      & runtime(s) & 4.58 & 22.38 & \textbf{ 0.26 } & 10.86 & 2.49 & 1.54 \\ \hline
   \multirow{3}{*}{ 108070.bmp } & UB & 9082.04 & 9083.42 & 8612.32 & \textbf{ 8422.24 } & 8445.73 & 8424.36 \\*
      & LB & 8414.64 & 8414.99 & 7818.70 & \textbf{ 8422.24 } & $\dagger$ & $\dagger$ \\*
      & runtime(s) & 28.23 & 52.32 & \textbf{ 0.41 } & 26.67 & 3.35 & 1.75 \\ \hline
   \multirow{3}{*}{ 108082.bmp } & UB & 5125.72 & 5127.99 & 5090.88 & \textbf{ 4800.15 } & 4815.78 & 4806.04 \\*
      & LB & 4786.46 & 4789.19 & 4380.66 & \textbf{ 4800.15 } & $\dagger$ & $\dagger$ \\*
      & runtime(s) & 6.85 & 25.07 & \textbf{ 0.16 } & 13.51 & 1.20 & 1.89 \\ \hline
   \multirow{3}{*}{ 109053.bmp } & UB & 4575.76 & 4579.97 & 4616.61 & \textbf{ 4421.13 } & 4424.05 & \textbf{ 4421.13 } \\*
      & LB & 4412.45 & 4413.08 & 4021.22 & \textbf{ 4421.13 } & $\dagger$ & $\dagger$ \\*
      & runtime(s) & 20.74 & 71.75 & \textbf{ 0.17 } & 9.64 & 1.50 & 0.84 \\ \hline
   \multirow{3}{*}{ 119082.bmp } & UB & \textbf{ 4512.04 } & \textbf{ 4512.04 } & 4642.96 & 4530.71 & 4535.85 & 4532.29 \\*
      & LB & 4526.91 & 4526.91 & 4346.24 & \textbf{ 4530.71 } & $\dagger$ & $\dagger$ \\*
      & runtime(s) & 0.10 & 0.10 & \textbf{ 0.02 } & 0.48 & 1.53 & 1.57 \\ \hline
   \multirow{3}{*}{ 12084.bmp } & UB & 7502.34 & 7502.34 & 7443.28 & \textbf{ 7284.45 } & 7301.17 & 7287.68 \\*
      & LB & 7276.26 & 7277.38 & 6941.02 & \textbf{ 7284.45 } & $\dagger$ & $\dagger$ \\*
      & runtime(s) & 3.48 & 8.10 & \textbf{ 0.15 } & 2.47 & 2.20 & 3.99 \\ \hline
   \multirow{3}{*}{ 123074.bmp } & UB & 4200.21 & 4204.24 & 4031.03 & \textbf{ 3842.74 } & 3856.82 & 3847.83 \\*
      & LB & 3829.51 & 3829.04 & 3439.47 & \textbf{ 3842.74 } & $\dagger$ & $\dagger$ \\*
      & runtime(s) & 35.27 & 24.55 & \textbf{ 0.14 } & 23.01 & 0.47 & 0.86 \\ \hline
   \multirow{3}{*}{ 126007.bmp } & UB & 2756.86 & 2760.64 & 2747.72 & \textbf{ 2684.83 } & 2706.76 & 2685.26 \\*
      & LB & 2677.02 & 2677.08 & 2512.08 & \textbf{ 2684.83 } & $\dagger$ & $\dagger$ \\*
      & runtime(s) & 0.23 & 0.53 & \textbf{ 0.01 } & 0.79 & 0.38 & 0.82 \\ \hline
   \multirow{3}{*}{ 130026.bmp } & UB & 6066.91 & 6066.91 & 5580.58 & \textbf{ 5350.83 } & 5369.95 & 5354.31 \\*
      & LB & 5331.00 & 5336.93 & 4828.82 & \textbf{ 5350.83 } & $\dagger$ & $\dagger$ \\*
      & runtime(s) & 36.91 & 167.54 & \textbf{ 0.26 } & 19.66 & 0.99 & 1.40 \\ \hline
   \multirow{3}{*}{ 134035.bmp } & UB & 6840.69 & 6840.69 & 6679.89 & \textbf{ 6578.98 } & 6595.87 & 6579.62 \\*
      & LB & 6562.20 & 6565.03 & 6166.95 & \textbf{ 6578.98 } & $\dagger$ & $\dagger$ \\*
      & runtime(s) & 8.44 & 43.34 & \textbf{ 0.19 } & 28.82 & 1.81 & 1.33 \\ \hline
   \multirow{3}{*}{ 14037.bmp } & UB & 1566.34 & 1582.13 & 1431.56 & \textbf{ 1383.14 } & 1393.66 & \textbf{ 1383.14 } \\*
      & LB & 1375.55 & 1375.55 & 1274.27 & \textbf{ 1383.14 } & $\dagger$ & $\dagger$ \\*
      & runtime(s) & 0.09 & 0.17 & \textbf{ 0.01 } & 0.25 & 0.13 & 0.22 \\ \hline
   \multirow{3}{*}{ 143090.bmp } & UB & 1728.22 & 1728.22 & 1807.41 & \textbf{ 1714.38 } & 1725.88 & 1715.76 \\*
      & LB & 1712.44 & 1712.26 & 1595.54 & \textbf{ 1714.38 } & $\dagger$ & $\dagger$ \\*
      & runtime(s) & 0.43 & 2.49 & \textbf{ 0.01 } & 0.56 & 0.44 & 0.34 \\ \hline
   \multirow{3}{*}{ 145086.bmp } & UB & 3806.23 & 3808.75 & 3407.83 & \textbf{ 3322.21 } & 3329.14 & 3322.59 \\*
      & LB & 3319.36 & 3319.34 & 3197.53 & \textbf{ 3322.21 } & $\dagger$ & $\dagger$ \\*
      & runtime(s) & 0.32 & 0.45 & \textbf{ 0.01 } & 0.83 & 0.41 & 1.59 \\ \hline
   \multirow{3}{*}{ 147091.bmp } & UB & 4092.25 & 4092.25 & 4129.72 & \textbf{ 3973.71 } & 3982.30 & 3975.15 \\*
      & LB & 3968.35 & 3970.58 & 3734.67 & \textbf{ 3973.71 } & $\dagger$ & $\dagger$ \\*
      & runtime(s) & 1.50 & 10.96 & \textbf{ 0.10 } & 13.24 & 0.90 & 0.83 \\ \hline
   \multirow{3}{*}{ 148026.bmp } & UB & 8411.55 & 8411.55 & 8436.70 & \textbf{ 8205.98 } & 8226.20 & 8207.72 \\*
      & LB & 8198.62 & 8199.84 & 7780.68 & \textbf{ 8205.98 } & $\dagger$ & $\dagger$ \\*
      & runtime(s) & 6.48 & 29.65 & \textbf{ 0.15 } & 4.49 & 3.10 & 2.73 \\ \hline
   \multirow{3}{*}{ 148089.bmp } & UB & 6680.96 & 6682.82 & 6666.83 & \textbf{ 6439.58 } & 6455.48 & 6440.33 \\*
      & LB & 6432.06 & 6431.52 & 6030.94 & \textbf{ 6439.58 } & $\dagger$ & $\dagger$ \\*
      & runtime(s) & 10.44 & 15.57 & \textbf{ 0.17 } & 18.64 & 2.00 & 1.53 \\ \hline
   \multirow{3}{*}{ 156065.bmp } & UB & 5798.42 & 5801.59 & 5429.45 & \textbf{ 5234.15 } & 5248.18 & 5234.76 \\*
      & LB & 5224.22 & 5225.10 & 4857.14 & \textbf{ 5234.15 } & $\dagger$ & $\dagger$ \\*
      & runtime(s) & 10.19 & 35.24 & \textbf{ 0.15 } & 18.51 & 1.49 & 1.12 \\ \hline
   \multirow{3}{*}{ 157055.bmp } & UB & 4797.86 & 4798.32 & 4768.87 & \textbf{ 4685.17 } & 4696.42 & \textbf{ 4685.17 } \\*
      & LB & 4679.04 & 4679.39 & 4472.39 & \textbf{ 4685.17 } & $\dagger$ & $\dagger$ \\*
      & runtime(s) & 0.49 & 3.10 & \textbf{ 0.02 } & 2.60 & 1.37 & 1.48 \\ \hline
   \multirow{3}{*}{ 159008.bmp } & UB & 4688.87 & 4694.40 & 4814.85 & \textbf{ 4540.87 } & 4569.12 & 4541.22 \\*
      & LB & 4534.88 & 4535.76 & 4217.95 & \textbf{ 4540.87 } & $\dagger$ & $\dagger$ \\*
      & runtime(s) & 3.53 & 21.00 & \textbf{ 0.10 } & 16.64 & 0.83 & 1.29 \\ \hline
   \multirow{3}{*}{ 160068.bmp } & UB & 3263.91 & 3265.02 & 3264.31 & \textbf{ 3089.32 } & 3103.23 & \textbf{ 3089.32 } \\*
      & LB & 3088.23 & 3088.17 & 2866.45 & \textbf{ 3089.32 } & $\dagger$ & $\dagger$ \\*
      & runtime(s) & 1.18 & 2.28 & \textbf{ 0.05 } & 1.80 & 0.61 & 1.07 \\ \hline
   \multirow{3}{*}{ 16077.bmp } & UB & 4440.51 & 4443.05 & 4408.62 & \textbf{ 4227.88 } & 4236.46 & 4228.78 \\*
      & LB & 4224.10 & 4224.60 & 3921.75 & \textbf{ 4227.88 } & $\dagger$ & $\dagger$ \\*
      & runtime(s) & 4.37 & 13.81 & \textbf{ 0.07 } & 3.22 & 1.18 & 1.63 \\ \hline
   \multirow{3}{*}{ 163085.bmp } & UB & 4493.17 & 4493.17 & 4577.62 & \textbf{ 4381.13 } & 4406.98 & 4384.59 \\*
      & LB & 4370.36 & 4371.00 & 3983.52 & \textbf{ 4381.13 } & $\dagger$ & $\dagger$ \\*
      & runtime(s) & 21.91 & 35.30 & \textbf{ 0.15 } & 9.82 & 0.91 & 1.30 \\ \hline
   \multirow{3}{*}{ 167062.bmp } & UB & 1623.20 & 1623.20 & 1281.48 & \textbf{ 1273.72 } & 1275.87 & 1275.67 \\*
      & LB & 1273.01 & 1273.29 & 1233.39 & \textbf{ 1273.72 } & $\dagger$ & $\dagger$ \\*
      & runtime(s) & 0.10 & 0.66 & \textbf{ 0.00 } & 1.00 & 0.11 & 0.17 \\ \hline
   \multirow{3}{*}{ 167083.bmp } & UB & 8979.42 & 8979.42 & 8545.37 & \textbf{ 8331.63 } & 8344.35 & 8331.90 \\*
      & LB & 8325.80 & 8328.03 & 7921.06 & \textbf{ 8331.63 } & $\dagger$ & $\dagger$ \\*
      & runtime(s) & 9.83 & 53.28 & \textbf{ 0.23 } & 11.74 & 2.27 & 1.80 \\ \hline
   \multirow{3}{*}{ 170057.bmp } & UB & 3602.31 & 3602.31 & 3355.95 & \textbf{ 3266.17 } & 3273.20 & 3266.73 \\*
      & LB & 3260.19 & 3260.98 & 2989.38 & \textbf{ 3266.17 } & $\dagger$ & $\dagger$ \\*
      & runtime(s) & 21.07 & 26.30 & \textbf{ 0.07 } & 18.20 & 0.53 & 0.67 \\ \hline
   \multirow{3}{*}{ 175032.bmp } & UB & 11863.40 & 11863.40 & 11926.00 & \textbf{ 11542.63 } & 11566.74 & 11547.67 \\*
      & LB & 11525.63 & 11526.57 & 10543.16 & \textbf{ 11542.63 } & $\dagger$ & $\dagger$ \\*
      & runtime(s) & 623.66 & 632.98 & \textbf{ 1.27 } & 165.83 & 3.25 & 3.98 \\ \hline
   \multirow{3}{*}{ 175043.bmp } & UB & 8022.65 & 8033.50 & 8224.34 & \textbf{ 7816.92 } & 7844.49 & 7822.01 \\*
      & LB & 7809.60 & 7811.17 & 7136.44 & \textbf{ 7816.92 } & $\dagger$ & $\dagger$ \\*
      & runtime(s) & 4.74 & 29.83 & \textbf{ 0.39 } & 13.17 & 4.30 & 3.46 \\ \hline
   \multirow{3}{*}{ 182053.bmp } & UB & 3700.96 & 3700.96 & 3714.59 & \textbf{ 3579.24 } & 3602.74 & 3582.99 \\*
      & LB & 3575.85 & 3576.12 & 3321.74 & \textbf{ 3579.24 } & $\dagger$ & $\dagger$ \\*
      & runtime(s) & 4.05 & 12.83 & \textbf{ 0.07 } & 14.11 & 0.65 & 0.92 \\ \hline
   \multirow{3}{*}{ 189080.bmp } & UB & 1152.76 & 1155.19 & 1095.38 & \textbf{ 1077.47 } & 1086.93 & 1078.41 \\*
      & LB & 1072.82 & 1073.00 & 972.41 & \textbf{ 1077.47 } & $\dagger$ & $\dagger$ \\*
      & runtime(s) & 0.07 & 0.11 & \textbf{ 0.00 } & 0.25 & 0.11 & 0.49 \\ \hline
   \multirow{3}{*}{ 19021.bmp } & UB & 4717.39 & 4718.32 & 4693.10 & \textbf{ 4515.08 } & 4521.63 & 4515.27 \\*
      & LB & 4498.74 & 4499.79 & 4178.50 & \textbf{ 4515.08 } & $\dagger$ & $\dagger$ \\*
      & runtime(s) & 1.67 & 4.36 & \textbf{ 0.08 } & 12.93 & 1.16 & 1.41 \\ \hline
   \multirow{3}{*}{ 196073.bmp } & UB & 592.63 & 594.41 & 572.73 & \textbf{ 545.47 } & 548.39 & \textbf{ 545.47 } \\*
      & LB & 544.89 & 544.89 & 508.04 & \textbf{ 545.47 } & $\dagger$ & $\dagger$ \\*
      & runtime(s) & 0.12 & 0.55 & \textbf{ 0.00 } & 0.76 & 0.07 & 0.24 \\ \hline
   \multirow{3}{*}{ 197017.bmp } & UB & 3207.33 & 3207.33 & 2857.86 & \textbf{ 2798.77 } & 2801.34 & \textbf{ 2798.77 } \\*
      & LB & 2792.60 & 2792.59 & 2663.98 & \textbf{ 2798.77 } & $\dagger$ & $\dagger$ \\*
      & runtime(s) & 0.82 & 1.01 & \textbf{ 0.01 } & 0.77 & 0.38 & 0.77 \\ \hline
   \multirow{3}{*}{ 208001.bmp } & UB & 6533.30 & 6533.30 & 6605.51 & \textbf{ 6272.68 } & 6277.73 & 6275.37 \\*
      & LB & 6254.97 & 6256.22 & 5773.37 & \textbf{ 6272.68 } & $\dagger$ & $\dagger$ \\*
      & runtime(s) & 22.20 & 63.32 & \textbf{ 0.25 } & 78.10 & 2.12 & 1.69 \\ \hline
   \multirow{3}{*}{ 210088.bmp } & UB & 1954.78 & 1956.20 & 2034.65 & \textbf{ 1895.44 } & 1901.54 & 1896.58 \\*
      & LB & 1891.95 & 1891.95 & 1726.75 & \textbf{ 1895.44 } & $\dagger$ & $\dagger$ \\*
      & runtime(s) & 0.26 & 0.38 & \textbf{ 0.02 } & 1.26 & 0.49 & 0.51 \\ \hline
   \multirow{3}{*}{ 21077.bmp } & UB & 3031.01 & 3032.66 & 3001.74 & \textbf{ 2946.71 } & 2966.82 & \textbf{ 2946.71 } \\*
      & LB & 2944.67 & 2944.60 & 2793.10 & \textbf{ 2946.71 } & $\dagger$ & $\dagger$ \\*
      & runtime(s) & 1.42 & 3.86 & \textbf{ 0.01 } & 4.60 & 0.44 & 1.13 \\ \hline
   \multirow{3}{*}{ 216081.bmp } & UB & 4462.75 & 4463.13 & 4324.17 & \textbf{ 4158.73 } & 4176.92 & \textbf{ 4158.73 } \\*
      & LB & 4155.98 & 4156.01 & 3952.38 & \textbf{ 4158.73 } & $\dagger$ & $\dagger$ \\*
      & runtime(s) & 0.34 & 0.60 & \textbf{ 0.03 } & 1.37 & 1.13 & 1.08 \\ \hline
   \multirow{3}{*}{ 219090.bmp } & UB & 2810.16 & 2810.16 & 2624.72 & \textbf{ 2501.27 } & 2507.80 & \textbf{ 2501.27 } \\*
      & LB & 2499.72 & 2499.93 & 2379.46 & \textbf{ 2501.27 } & $\dagger$ & $\dagger$ \\*
      & runtime(s) & 0.16 & 0.55 & \textbf{ 0.02 } & 0.14 & 0.48 & 0.97 \\ \hline
   \multirow{3}{*}{ 220075.bmp } & UB & 3265.75 & 3269.41 & 3155.80 & \textbf{ 3115.95 } & 3124.04 & \textbf{ 3115.95 } \\*
      & LB & 3110.83 & 3111.22 & 2939.67 & \textbf{ 3115.95 } & $\dagger$ & $\dagger$ \\*
      & runtime(s) & 0.16 & 1.46 & \textbf{ 0.01 } & 0.38 & 0.75 & 1.65 \\ \hline
   \multirow{3}{*}{ 223061.bmp } & UB & 6812.98 & 6818.95 & 6751.72 & \textbf{ 6576.83 } & 6591.07 & 6578.02 \\*
      & LB & 6565.07 & 6566.44 & 5983.87 & \textbf{ 6576.83 } & $\dagger$ & $\dagger$ \\*
      & runtime(s) & 89.00 & 248.06 & \textbf{ 0.34 } & 37.60 & 1.67 & 1.53 \\ \hline
   \multirow{3}{*}{ 227092.bmp } & UB & 2178.48 & 2180.95 & 2051.77 & \textbf{ 1998.46 } & 2001.58 & 2001.19 \\*
      & LB & 1986.49 & 1986.11 & 1816.59 & \textbf{ 1998.46 } & $\dagger$ & $\dagger$ \\*
      & runtime(s) & 13.93 & 26.75 & \textbf{ 0.03 } & 4.25 & 0.37 & 0.34 \\ \hline
   \multirow{3}{*}{ 229036.bmp } & UB & 6681.13 & 6684.62 & 6250.80 & \textbf{ 6125.73 } & 6153.41 & 6126.61 \\*
      & LB & 6115.98 & 6116.08 & 5729.80 & \textbf{ 6125.73 } & $\dagger$ & $\dagger$ \\*
      & runtime(s) & 4.32 & 12.41 & \textbf{ 0.09 } & 4.83 & 1.53 & 2.00 \\ \hline
   \multirow{3}{*}{ 236037.bmp } & UB & 9324.06 & 9325.93 & 9507.86 & \textbf{ 9060.84 } & 9071.82 & \textbf{ 9060.84 } \\*
      & LB & 9047.32 & 9048.52 & 8261.54 & \textbf{ 9060.84 } & $\dagger$ & $\dagger$ \\*
      & runtime(s) & 62.98 & 166.69 & \textbf{ 0.75 } & 20.14 & 5.02 & 4.38 \\ \hline
   \multirow{3}{*}{ 24077.bmp } & UB & 4846.51 & 4847.62 & 4892.44 & \textbf{ 4761.98 } & 4779.26 & \textbf{ 4761.98 } \\*
      & LB & 4760.74 & 4760.88 & 4531.31 & \textbf{ 4761.98 } & $\dagger$ & $\dagger$ \\*
      & runtime(s) & 3.06 & 5.78 & \textbf{ 0.03 } & 5.10 & 1.14 & 2.40 \\ \hline
   \multirow{3}{*}{ 241004.bmp } & UB & 1271.42 & 1272.99 & 1077.34 & \textbf{ 1057.14 } & 1060.90 & 1057.42 \\*
      & LB & 1056.38 & 1056.34 & 984.58 & \textbf{ 1057.14 } & $\dagger$ & $\dagger$ \\*
      & runtime(s) & 0.18 & 0.92 & \textbf{ 0.00 } & 0.15 & 0.13 & 0.26 \\ \hline
   \multirow{3}{*}{ 241048.bmp } & UB & 4972.81 & 4972.81 & 4945.52 & \textbf{ 4730.95 } & 4750.17 & 4731.19 \\*
      & LB & 4719.03 & 4719.99 & 4343.63 & \textbf{ 4730.95 } & $\dagger$ & $\dagger$ \\*
      & runtime(s) & 4.69 & 26.23 & \textbf{ 0.10 } & 17.26 & 0.79 & 2.00 \\ \hline
   \multirow{3}{*}{ 253027.bmp } & UB & 6839.82 & 6841.19 & 6950.82 & \textbf{ 6606.62 } & 6614.15 & \textbf{ 6606.62 } \\*
      & LB & 6603.97 & 6604.66 & 6371.53 & \textbf{ 6606.62 } & $\dagger$ & $\dagger$ \\*
      & runtime(s) & 18.42 & 49.86 & \textbf{ 0.32 } & 7.39 & 2.90 & 2.36 \\ \hline
   \multirow{3}{*}{ 253055.bmp } & UB & 1684.63 & 1684.63 & 1549.84 & \textbf{ 1502.16 } & 1518.91 & \textbf{ 1502.16 } \\*
      & LB & 1497.70 & 1497.78 & 1408.65 & \textbf{ 1502.16 } & $\dagger$ & $\dagger$ \\*
      & runtime(s) & 0.30 & 0.99 & \textbf{ 0.01 } & 1.31 & 0.09 & 0.19 \\ \hline
   \multirow{3}{*}{ 260058.bmp } & UB & 1458.96 & 1458.96 & 1091.96 & \textbf{ 1084.26 } & 1085.01 & \textbf{ 1084.26 } \\*
      & LB & 1082.23 & 1082.23 & 1017.59 & \textbf{ 1084.26 } & $\dagger$ & $\dagger$ \\*
      & runtime(s) & 0.03 & 0.05 & \textbf{ 0.00 } & 0.18 & 0.10 & 0.15 \\ \hline
   \multirow{3}{*}{ 271035.bmp } & UB & 3706.68 & 3707.59 & 3875.99 & \textbf{ 3621.00 } & 3657.57 & 3621.48 \\*
      & LB & 3613.78 & 3614.44 & 3326.25 & \textbf{ 3621.00 } & $\dagger$ & $\dagger$ \\*
      & runtime(s) & 0.65 & 3.19 & \textbf{ 0.10 } & 12.45 & 0.78 & 1.75 \\ \hline
   \multirow{3}{*}{ 285079.bmp } & UB & 6105.56 & 6105.56 & 5773.41 & \textbf{ 5610.12 } & 5640.22 & 5612.68 \\*
      & LB & 5603.39 & 5603.74 & 5246.17 & \textbf{ 5610.12 } & $\dagger$ & $\dagger$ \\*
      & runtime(s) & 9.30 & 28.78 & \textbf{ 0.10 } & 26.96 & 1.30 & 2.02 \\ \hline
   \multirow{3}{*}{ 291000.bmp } & UB & 10554.63 & 10557.44 & 10401.10 & \textbf{ 10208.87 } & 10222.91 & 10210.47 \\*
      & LB & 10199.14 & 10200.78 & 9626.61 & \textbf{ 10208.87 } & $\dagger$ & $\dagger$ \\*
      & runtime(s) & 13.41 & 44.85 & \textbf{ 0.56 } & 39.32 & 2.46 & 2.24 \\ \hline
   \multirow{3}{*}{ 295087.bmp } & UB & 4562.91 & 4571.66 & 4509.08 & \textbf{ 4290.54 } & 4299.90 & 4291.66 \\*
      & LB & 4288.00 & 4287.83 & 3985.98 & \textbf{ 4290.54 } & $\dagger$ & $\dagger$ \\*
      & runtime(s) & 3.71 & 12.35 & \textbf{ 0.11 } & 6.34 & 1.43 & 1.83 \\ \hline
   \multirow{3}{*}{ 296007.bmp } & UB & 2503.63 & 2507.50 & 2384.77 & \textbf{ 2293.13 } & 2306.24 & 2293.83 \\*
      & LB & 2290.91 & 2290.91 & 2115.21 & \textbf{ 2293.13 } & $\dagger$ & $\dagger$ \\*
      & runtime(s) & 0.32 & 0.63 & \textbf{ 0.03 } & 0.34 & 0.28 & 0.47 \\ \hline
   \multirow{3}{*}{ 296059.bmp } & UB & 2240.64 & 2241.78 & 2160.60 & \textbf{ 2044.71 } & 2045.98 & \textbf{ 2044.71 } \\*
      & LB & 2039.19 & 2039.18 & 1891.55 & \textbf{ 2044.71 } & $\dagger$ & $\dagger$ \\*
      & runtime(s) & 0.57 & 0.99 & \textbf{ 0.02 } & 1.14 & 0.27 & 0.24 \\ \hline
   \multirow{3}{*}{ 299086.bmp } & UB & 1570.66 & 1570.84 & 1650.07 & \textbf{ 1557.24 } & 1561.49 & \textbf{ 1557.24 } \\*
      & LB & 1550.54 & 1550.56 & 1450.72 & \textbf{ 1557.24 } & $\dagger$ & $\dagger$ \\*
      & runtime(s) & 0.47 & 2.10 & \textbf{ 0.02 } & 0.11 & 0.29 & 0.41 \\ \hline
   \multirow{3}{*}{ 300091.bmp } & UB & 1839.69 & 1839.69 & 1508.08 & \textbf{ 1495.10 } & 1498.51 & \textbf{ 1495.10 } \\*
      & LB & 1487.72 & 1487.90 & 1426.88 & \textbf{ 1495.10 } & $\dagger$ & $\dagger$ \\*
      & runtime(s) & 0.16 & 0.49 & \textbf{ 0.01 } & 1.91 & 0.21 & 0.16 \\ \hline
   \multirow{3}{*}{ 302008.bmp } & UB & 2616.47 & 2616.47 & 2583.88 & \textbf{ 2543.23 } & 2557.22 & \textbf{ 2543.23 } \\*
      & LB & 2539.89 & 2539.89 & 2481.98 & \textbf{ 2543.23 } & $\dagger$ & $\dagger$ \\*
      & runtime(s) & 0.21 & 0.54 & \textbf{ 0.01 } & 0.23 & 0.17 & 0.46 \\ \hline
   \multirow{3}{*}{ 304034.bmp } & UB & 8090.64 & 8090.64 & 8159.63 & \textbf{ 7835.47 } & 7850.78 & 7836.49 \\*
      & LB & 7824.86 & 7827.27 & 7212.50 & \textbf{ 7835.47 } & $\dagger$ & $\dagger$ \\*
      & runtime(s) & 9.24 & 54.06 & \textbf{ 0.40 } & 24.88 & 3.14 & 2.05 \\ \hline
   \multirow{3}{*}{ 304074.bmp } & UB & 4490.66 & 4497.86 & 4138.04 & \textbf{ 3891.88 } & 3897.33 & \textbf{ 3891.88 } \\*
      & LB & 3883.34 & 3882.52 & 3543.87 & \textbf{ 3891.88 } & $\dagger$ & $\dagger$ \\*
      & runtime(s) & 3.48 & 6.21 & \textbf{ 0.11 } & 0.51 & 0.83 & 0.79 \\ \hline
   \multirow{3}{*}{ 306005.bmp } & UB & 4683.24 & 4690.02 & 4497.63 & \textbf{ 4290.25 } & 4305.78 & \textbf{ 4290.25 } \\*
      & LB & 4286.33 & 4286.33 & 4004.02 & \textbf{ 4290.25 } & $\dagger$ & $\dagger$ \\*
      & runtime(s) & 5.25 & 14.67 & \textbf{ 0.11 } & 17.38 & 0.91 & 0.98 \\ \hline
   \multirow{3}{*}{ 3096.bmp } & UB & \textbf{ 295.31 } & \textbf{ 295.31 } & 396.90 & 396.90 & 411.23 & 396.90 \\*
      & LB & 388.89 & 388.89 & 389.83 & \textbf{ 396.90 } & $\dagger$ & $\dagger$ \\*
      & runtime(s) & \textbf{ 0.00 } & 0.00 & \textbf{ 0.00 } & 0.02 & 0.00 & 0.06 \\ \hline
   \multirow{3}{*}{ 33039.bmp } & UB & 8286.04 & 8286.04 & 8555.09 & \textbf{ 8069.67 } & 8082.04 & 8070.48 \\*
      & LB & 8061.19 & 8061.78 & 7384.06 & \textbf{ 8069.67 } & $\dagger$ & $\dagger$ \\*
      & runtime(s) & 27.46 & 67.13 & \textbf{ 0.44 } & 20.06 & 3.61 & 3.29 \\ \hline
   \multirow{3}{*}{ 351093.bmp } & UB & 6152.76 & 6155.04 & 6342.60 & \textbf{ 6105.28 } & 6111.45 & 6107.08 \\*
      & LB & 6096.66 & 6097.34 & 5679.10 & \textbf{ 6105.28 } & $\dagger$ & $\dagger$ \\*
      & runtime(s) & 19.48 & 64.42 & \textbf{ 0.20 } & 19.54 & 2.10 & 2.30 \\ \hline
   \multirow{3}{*}{ 361010.bmp } & UB & 3500.13 & 3503.88 & 3459.05 & \textbf{ 3361.02 } & 3368.12 & 3364.98 \\*
      & LB & 3356.03 & 3356.01 & 3189.41 & \textbf{ 3361.02 } & $\dagger$ & $\dagger$ \\*
      & runtime(s) & 0.42 & 1.06 & \textbf{ 0.02 } & 0.39 & 0.91 & 0.79 \\ \hline
   \multirow{3}{*}{ 37073.bmp } & UB & 1991.36 & 1991.36 & 2044.68 & \textbf{ 1975.00 } & 1982.24 & \textbf{ 1975.00 } \\*
      & LB & 1972.11 & 1972.11 & 1904.57 & \textbf{ 1975.00 } & $\dagger$ & $\dagger$ \\*
      & runtime(s) & 0.14 & 0.30 & \textbf{ 0.01 } & 0.15 & 0.31 & 0.49 \\ \hline
   \multirow{3}{*}{ 376043.bmp } & UB & 6549.11 & 6557.74 & 6054.26 & \textbf{ 5863.83 } & 5872.57 & 5864.00 \\*
      & LB & 5859.62 & 5860.01 & 5433.92 & \textbf{ 5863.83 } & $\dagger$ & $\dagger$ \\*
      & runtime(s) & 7.34 & 31.66 & \textbf{ 0.17 } & 9.80 & 1.32 & 0.94 \\ \hline
   \multirow{3}{*}{ 38082.bmp } & UB & 8324.62 & 8324.62 & 8492.54 & \textbf{ 8060.34 } & 8065.57 & 8066.62 \\*
      & LB & 8047.84 & 8048.69 & 7359.93 & \textbf{ 8060.34 } & $\dagger$ & $\dagger$ \\*
      & runtime(s) & 439.50 & 702.48 & \textbf{ 0.62 } & 26.23 & 2.41 & 2.49 \\ \hline
   \multirow{3}{*}{ 38092.bmp } & UB & 4501.49 & 4511.20 & 4213.16 & \textbf{ 4071.86 } & 4085.14 & \textbf{ 4071.86 } \\*
      & LB & 4067.32 & 4067.13 & 3814.35 & \textbf{ 4071.86 } & $\dagger$ & $\dagger$ \\*
      & runtime(s) & 1.24 & 2.30 & \textbf{ 0.09 } & 0.46 & 0.91 & 0.99 \\ \hline
   \multirow{3}{*}{ 385039.bmp } & UB & 3995.35 & 3995.35 & 3876.12 & \textbf{ 3745.53 } & 3752.81 & \textbf{ 3745.53 } \\*
      & LB & 3741.30 & 3742.39 & 3565.10 & \textbf{ 3745.53 } & $\dagger$ & $\dagger$ \\*
      & runtime(s) & 0.42 & 3.10 & \textbf{ 0.06 } & 2.51 & 1.20 & 0.81 \\ \hline
   \multirow{3}{*}{ 41033.bmp } & UB & 2585.71 & 2585.71 & 2050.50 & \textbf{ 1994.24 } & 2001.06 & 1997.55 \\*
      & LB & 1988.64 & 1989.20 & 1841.58 & \textbf{ 1994.24 } & $\dagger$ & $\dagger$ \\*
      & runtime(s) & 0.54 & 4.62 & \textbf{ 0.01 } & 0.94 & 0.28 & 0.40 \\ \hline
   \multirow{3}{*}{ 41069.bmp } & UB & 6685.12 & 6685.12 & 5182.46 & \textbf{ 5110.96 } & 5115.12 & 5122.63 \\*
      & LB & 5091.16 & 5092.47 & 4896.23 & \textbf{ 5110.96 } & $\dagger$ & $\dagger$ \\*
      & runtime(s) & 24.57 & 59.03 & \textbf{ 0.07 } & 35.77 & 0.36 & 0.54 \\ \hline
   \multirow{3}{*}{ 42012.bmp } & UB & 3561.54 & 3561.54 & 3524.18 & \textbf{ 3248.70 } & 3252.28 & 3251.04 \\*
      & LB & 3238.97 & 3240.27 & 3005.31 & \textbf{ 3248.70 } & $\dagger$ & $\dagger$ \\*
      & runtime(s) & 3.96 & 27.65 & \textbf{ 0.08 } & 8.95 & 0.59 & 0.73 \\ \hline
   \multirow{3}{*}{ 42049.bmp } & UB & \textbf{ 970.16 } & \textbf{ 970.16 } & 1098.96 & 1069.22 & 1076.54 & 1069.22 \\*
      & LB & 996.85 & 996.85 & 997.53 & \textbf{ 1069.22 } & $\dagger$ & $\dagger$ \\*
      & runtime(s) & \textbf{ 0.00 } & \textbf{ 0.00 } & \textbf{ 0.00 } & 0.22 & 0.10 & 0.31 \\ \hline
   \multirow{3}{*}{ 43074.bmp } & UB & 2864.46 & 2871.95 & 2374.97 & \textbf{ 2332.83 } & 2340.46 & 2333.36 \\*
      & LB & 2329.22 & 2329.38 & 2166.88 & \textbf{ 2332.83 } & $\dagger$ & $\dagger$ \\*
      & runtime(s) & 1.82 & 6.38 & \textbf{ 0.03 } & 5.82 & 0.32 & 0.28 \\ \hline
   \multirow{3}{*}{ 45096.bmp } & UB & 1088.06 & 1092.36 & 1024.96 & \textbf{ 977.78 } & 1031.06 & \textbf{ 977.78 } \\*
      & LB & 975.97 & 975.97 & 911.19 & \textbf{ 977.78 } & $\dagger$ & $\dagger$ \\*
      & runtime(s) & 0.06 & 0.12 & \textbf{ 0.01 } & 0.10 & 0.05 & 0.25 \\ \hline
   \multirow{3}{*}{ 54082.bmp } & UB & 4075.17 & 4075.17 & 3914.51 & \textbf{ 3796.36 } & 3806.52 & 3797.18 \\*
      & LB & 3785.04 & 3785.94 & 3486.46 & \textbf{ 3796.36 } & $\dagger$ & $\dagger$ \\*
      & runtime(s) & 4.80 & 36.47 & \textbf{ 0.10 } & 5.80 & 0.43 & 1.03 \\ \hline
   \multirow{3}{*}{ 55073.bmp } & UB & 8445.42 & 8449.44 & 8179.32 & \textbf{ 7835.96 } & 7844.37 & 7838.99 \\*
      & LB & 7820.19 & 7822.05 & 7189.03 & \textbf{ 7835.96 } & $\dagger$ & $\dagger$ \\*
      & runtime(s) & 54.41 & 159.67 & \textbf{ 0.39 } & 13.66 & 2.54 & 1.67 \\ \hline
   \multirow{3}{*}{ 58060.bmp } & UB & 10196.99 & 10197.09 & 10133.69 & \textbf{ 9881.86 } & 9891.88 & 9882.77 \\*
      & LB & 9877.23 & 9878.45 & 9389.50 & \textbf{ 9881.86 } & $\dagger$ & $\dagger$ \\*
      & runtime(s) & 35.63 & 155.56 & \textbf{ 0.27 } & 20.02 & 4.15 & 6.54 \\ \hline
   \multirow{3}{*}{ 62096.bmp } & UB & 3915.47 & 3915.47 & 3485.31 & \textbf{ 3419.40 } & 3421.38 & 3420.03 \\*
      & LB & 3413.39 & 3412.83 & 3233.59 & \textbf{ 3419.40 } & $\dagger$ & $\dagger$ \\*
      & runtime(s) & 1.37 & 5.87 & \textbf{ 0.05 } & 5.70 & 0.54 & 0.56 \\ \hline
   \multirow{3}{*}{ 65033.bmp } & UB & 7766.31 & 7767.46 & 7647.20 & \textbf{ 7364.57 } & 7372.11 & 7365.30 \\*
      & LB & 7360.45 & 7360.57 & 6865.71 & \textbf{ 7364.57 } & $\dagger$ & $\dagger$ \\*
      & runtime(s) & 5.16 & 15.77 & \textbf{ 0.18 } & 11.87 & 2.04 & 2.05 \\ \hline
   \multirow{3}{*}{ 66053.bmp } & UB & 4871.59 & 4871.59 & 4522.25 & \textbf{ 4427.25 } & 4434.10 & \textbf{ 4427.25 } \\*
      & LB & 4417.69 & 4418.10 & 4172.50 & \textbf{ 4427.25 } & $\dagger$ & $\dagger$ \\*
      & runtime(s) & 3.11 & 7.20 & \textbf{ 0.07 } & 4.29 & 0.65 & 0.80 \\ \hline
   \multirow{3}{*}{ 69015.bmp } & UB & 4378.98 & 4378.98 & 4183.41 & \textbf{ 4024.45 } & 4032.54 & 4025.34 \\*
      & LB & 4019.16 & 4019.52 & 3778.35 & \textbf{ 4024.45 } & $\dagger$ & $\dagger$ \\*
      & runtime(s) & 1.66 & 4.08 & \textbf{ 0.06 } & 6.14 & 1.00 & 1.27 \\ \hline
   \multirow{3}{*}{ 69020.bmp } & UB & 5828.19 & 5831.93 & 5527.82 & \textbf{ 5179.29 } & 5183.07 & \textbf{ 5179.29 } \\*
      & LB & 5170.40 & 5170.63 & 4796.09 & \textbf{ 5179.29 } & $\dagger$ & $\dagger$ \\*
      & runtime(s) & 45.51 & 84.91 & \textbf{ 0.20 } & 12.34 & 1.16 & 0.97 \\ \hline
   \multirow{3}{*}{ 69040.bmp } & UB & 8240.10 & 8242.44 & 8255.62 & \textbf{ 7974.58 } & 7994.72 & 7983.83 \\*
      & LB & 7955.22 & 7958.22 & 7213.49 & \textbf{ 7974.58 } & $\dagger$ & $\dagger$ \\*
      & runtime(s) & 230.92 & 371.04 & \textbf{ 0.55 } & 31.49 & 2.16 & 3.76 \\ \hline
   \multirow{3}{*}{ 76053.bmp } & UB & 4625.92 & 4625.92 & 4823.04 & \textbf{ 4514.99 } & 4527.26 & 4516.03 \\*
      & LB & 4504.16 & 4505.47 & 4073.68 & \textbf{ 4514.99 } & $\dagger$ & $\dagger$ \\*
      & runtime(s) & 13.42 & 35.32 & \textbf{ 0.14 } & 18.85 & 1.59 & 1.21 \\ \hline
   \multirow{3}{*}{ 78004.bmp } & UB & 3394.42 & 3394.80 & 3380.61 & \textbf{ 3254.61 } & 3271.58 & 3254.85 \\*
      & LB & 3248.43 & 3248.75 & 3109.85 & \textbf{ 3254.61 } & $\dagger$ & $\dagger$ \\*
      & runtime(s) & 0.41 & 1.67 & \textbf{ 0.01 } & 2.70 & 0.34 & 0.64 \\ \hline
   \multirow{3}{*}{ 8023.bmp } & UB & 4385.74 & 4388.92 & 4108.08 & \textbf{ 4023.38 } & 4032.92 & 4026.67 \\*
      & LB & 4019.49 & 4019.76 & 3679.77 & \textbf{ 4023.38 } & $\dagger$ & $\dagger$ \\*
      & runtime(s) & 16.96 & 56.59 & \textbf{ 0.14 } & 32.67 & 0.84 & 0.60 \\ \hline
   \multirow{3}{*}{ 85048.bmp } & UB & 6056.07 & 6056.07 & 6186.51 & \textbf{ 5851.38 } & 5863.69 & 5852.33 \\*
      & LB & 5844.50 & 5845.08 & 5452.57 & \textbf{ 5851.38 } & $\dagger$ & $\dagger$ \\*
      & runtime(s) & 6.40 & 18.79 & \textbf{ 0.16 } & 8.07 & 2.03 & 1.91 \\ \hline
   \multirow{3}{*}{ 86000.bmp } & UB & \textbf{ 4628.18 } & \textbf{ 4628.18 } & 4769.21 & 4633.86 & 4643.13 & 4633.96 \\*
      & LB & 4628.53 & 4628.53 & 4414.10 & \textbf{ 4633.86 } & $\dagger$ & $\dagger$ \\*
      & runtime(s) & 0.31 & 0.33 & \textbf{ 0.03 } & 4.90 & 1.49 & 1.76 \\ \hline
   \multirow{3}{*}{ 86016.bmp } & UB & 7930.26 & 7930.26 & 6654.84 & \textbf{ 6618.85 } & 6619.75 & 6620.36 \\*
      & LB & 6617.36 & 6617.46 & 6502.79 & \textbf{ 6618.85 } & $\dagger$ & $\dagger$ \\*
      & runtime(s) & 1.96 & 5.10 & \textbf{ 0.02 } & 2.62 & 0.36 & 0.53 \\ \hline
   \multirow{3}{*}{ 86068.bmp } & UB & 5870.65 & 5876.08 & 5289.20 & \textbf{ 5198.87 } & 5207.65 & 5205.68 \\*
      & LB & 5186.42 & 5185.40 & 4731.74 & \textbf{ 5198.87 } & $\dagger$ & $\dagger$ \\*
      & runtime(s) & 28.82 & 75.20 & \textbf{ 0.23 } & 9.72 & 1.10 & 0.60 \\ \hline
   \multirow{3}{*}{ 87046.bmp } & UB & 4641.36 & 4641.36 & 4470.22 & \textbf{ 4315.53 } & 4321.55 & \textbf{ 4315.53 } \\*
      & LB & 4304.42 & 4305.38 & 3985.82 & \textbf{ 4315.53 } & $\dagger$ & $\dagger$ \\*
      & runtime(s) & 4.69 & 9.11 & \textbf{ 0.09 } & 11.52 & 1.14 & 0.60 \\ \hline
   \multirow{3}{*}{ 89072.bmp } & UB & 4096.15 & 4098.40 & 4159.28 & \textbf{ 3933.75 } & 3948.57 & 3934.47 \\*
      & LB & 3925.22 & 3925.52 & 3707.83 & \textbf{ 3933.75 } & $\dagger$ & $\dagger$ \\*
      & runtime(s) & 0.93 & 3.41 & \textbf{ 0.06 } & 3.69 & 1.07 & 1.00 \\ \hline
   \multirow{3}{*}{ 97033.bmp } & UB & 4594.50 & 4594.50 & 4583.48 & \textbf{ 4320.69 } & 4336.56 & 4322.76 \\*
      & LB & 4311.08 & 4312.65 & 3996.62 & \textbf{ 4320.69 } & $\dagger$ & $\dagger$ \\*
      & runtime(s) & 2.36 & 9.46 & \textbf{ 0.06 } & 1.62 & 0.83 & 1.31 \\ \hline
\multicolumn{ 8 }{| c |}{ \textbf{ modularity clustering } } \\* \hline 
   \multirow{3}{*}{ adjnoun } & UB & \textbf{ -0.31 } & \textbf{ -0.31 } & -0.17 & 0.00 & 0.00 & -0.29 \\*
      & LB & -0.46 & -0.46 & -0.79 & \textbf{ -0.45 } & $\dagger$ & $\dagger$ \\*
      & runtime(s) & 1.10 & 1.11 & 0.29 & 7288.07 & \textbf{ 0.01 } & 92.48 \\ \hline
   \multirow{3}{*}{ dolphins } & UB & -0.53 & -0.53 & -0.34 & \textbf{ -0.53 } & 0.00 & -0.52 \\*
      & LB & -0.55 & -0.55 & -0.83 & \textbf{ -0.53 } & $\dagger$ & $\dagger$ \\*
      & runtime(s) & 0.24 & 0.22 & 0.03 & 44.61 & \textbf{ 0.00 } & 0.79 \\ \hline
   \multirow{3}{*}{ football } & UB & \textbf{ -0.60 } & \textbf{ -0.60 } & -0.34 & -0.60 & 0.00 & -0.49 \\*
      & LB & -0.62 & -0.62 & -0.90 & \textbf{ -0.60 } & $\dagger$ & $\dagger$ \\*
      & runtime(s) & 1.99 & 1.98 & 0.37 & 71.91 & \textbf{ 0.01 } & 7.63 \\ \hline
   \multirow{3}{*}{ karate } & UB & \textbf{ -0.42 } & \textbf{ -0.42 } & -0.28 & -0.42 & 0.00 & -0.32 \\*
      & LB & -0.43 & -0.43 & -0.66 & \textbf{ -0.42 } & $\dagger$ & $\dagger$ \\*
      & runtime(s) & 0.03 & 0.06 & 0.00 & 0.40 & \textbf{ 0.00 } & 0.08 \\ \hline
   \multirow{3}{*}{ lesmis } & UB & -0.56 & -0.56 & -0.37 & \textbf{ -0.56 } & 0.00 & -0.50 \\*
      & LB & -0.57 & -0.57 & -0.72 & \textbf{ -0.56 } & $\dagger$ & $\dagger$ \\*
      & runtime(s) & 0.18 & 0.83 & 0.03 & 3.77 & \textbf{ 0.00 } & 0.72 \\ \hline
   \multirow{3}{*}{ polbooks } & UB & -0.52 & -0.52 & -0.33 & \textbf{ -0.53 } & 0.00 & -0.51 \\*
      & LB & -0.56 & -0.56 & -0.83 & \textbf{ -0.54 } & $\dagger$ & $\dagger$ \\*
      & runtime(s) & 0.86 & 0.92 & 0.17 & 10057.85 & \textbf{ 0.01 } & 4.97 \\ \hline
\multicolumn{ 8 }{| c |}{ \textbf{ knott-3d-150 } } \\* \hline 
   \multirow{3}{*}{ gm\_knott\_3d\_032 } & UB & \textbf{ -5811.47 } & \textbf{ -5811.47 } & -5365.34 & -5811.47 & -5745.79 & -5767.34 \\*
      & LB & -5812.64 & -5811.49 & -6052.81 & \textbf{ -5811.47 } & $\dagger$ & $\dagger$ \\*
      & runtime(s) & 0.47 & 2.68 & \textbf{ 0.03 } & 2.70 & 0.17 & 0.48 \\ \hline
   \multirow{3}{*}{ gm\_knott\_3d\_033 } & UB & \textbf{ -2545.84 } & \textbf{ -2545.84 } & -2536.26 & -2545.84 & -2517.65 & -2545.84 \\*
      & LB & -2545.90 & -2545.90 & -3029.52 & \textbf{ -2545.84 } & $\dagger$ & $\dagger$ \\*
      & runtime(s) & 0.41 & 0.88 & \textbf{ 0.02 } & 1.28 & 0.22 & 0.35 \\ \hline
   \multirow{3}{*}{ gm\_knott\_3d\_034 } & UB & \textbf{ -4064.87 } & \textbf{ -4064.87 } & -3921.65 & -4064.87 & -3971.60 & -3972.66 \\*
      & LB & -4066.65 & -4066.31 & -4337.06 & \textbf{ -4064.87 } & $\dagger$ & $\dagger$ \\*
      & runtime(s) & 0.51 & 1.26 & \textbf{ 0.02 } & 3.69 & 0.47 & 0.58 \\ \hline
   \multirow{3}{*}{ gm\_knott\_3d\_035 } & UB & \textbf{ -4595.84 } & \textbf{ -4595.84 } & -4238.86 & -4595.84 & -4568.88 & -4595.84 \\*
      & LB & -4595.84 & -4595.84 & -4916.47 & \textbf{ -4595.84 } & $\dagger$ & $\dagger$ \\*
      & runtime(s) & 0.36 & 0.49 & \textbf{ 0.08 } & 1.18 & 0.27 & 0.46 \\ \hline
   \multirow{3}{*}{ gm\_knott\_3d\_036 } & UB & -5192.26 & -5192.26 & -4678.23 & \textbf{ -5198.37 } & -5159.18 & \textbf{ -5198.37 } \\*
      & LB & -5199.32 & -5198.80 & -5440.26 & \textbf{ -5198.37 } & $\dagger$ & $\dagger$ \\*
      & runtime(s) & 0.35 & 1.01 & \textbf{ 0.09 } & 1.32 & 0.32 & 0.62 \\ \hline
   \multirow{3}{*}{ gm\_knott\_3d\_037 } & UB & -4636.50 & -4633.67 & -4312.24 & \textbf{ -4638.99 } & -4616.39 & -4638.03 \\*
      & LB & -4642.64 & -4639.28 & -4928.85 & \textbf{ -4638.99 } & $\dagger$ & $\dagger$ \\*
      & runtime(s) & 3.53 & 18.53 & \textbf{ 0.03 } & 6.27 & 0.23 & 0.76 \\ \hline
   \multirow{3}{*}{ gm\_knott\_3d\_038 } & UB & \textbf{ -4625.80 } & \textbf{ -4625.80 } & -4235.69 & -4625.80 & -4616.99 & -4619.01 \\*
      & LB & -4625.80 & -4625.80 & -4818.46 & \textbf{ -4625.80 } & $\dagger$ & $\dagger$ \\*
      & runtime(s) & 0.35 & 1.52 & \textbf{ 0.05 } & 0.65 & 0.19 & 0.46 \\ \hline
   \multirow{3}{*}{ gm\_knott\_3d\_039 } & UB & -5092.32 & -5092.32 & -4476.96 & \textbf{ -5092.32 } & -5081.58 & -5082.97 \\*
      & LB & -5092.45 & \textbf{ -5092.32 } & -5318.04 & -5092.32 & $\dagger$ & $\dagger$ \\*
      & runtime(s) & 0.73 & 4.11 & \textbf{ 0.03 } & 1.88 & 0.23 & 0.58 \\ \hline
\multicolumn{ 8 }{| c |}{ \textbf{ knott-3d-300 } } \\* \hline 
  \multirow{3}{*}{ gm\_knott\_3d\_072 } & UB & -32986.39 & -32986.39 & -29632.23 & \textbf{ -32999.85 } & -32883.17 & -32875.45 \\*
      & LB & -33006.26 & -33006.26 & -34512.56 & \textbf{ -32999.85 } & $\dagger$ & $\dagger$ \\*
      & runtime(s) & 1466.31 & 1574.69 & 4.60 & 49.78 & 4.26 & \textbf{ 3.40 } \\ \hline
   \multirow{3}{*}{ gm\_knott\_3d\_073 } & UB & -25863.15 & -25863.15 & -23433.69 & \textbf{ -25863.38 } & -25738.25 & -25740.05 \\*
      & LB & -25866.11 & -25863.69 & -27464.92 & \textbf{ -25863.38 } & $\dagger$ & $\dagger$ \\*
      & runtime(s) & 58.50 & 81.89 & 3.24 & 55.72 & \textbf{ 2.72 } & 8.88 \\ \hline
   \multirow{3}{*}{ gm\_knott\_3d\_074 } & UB & -25685.56 & -25685.03 & -23513.94 & \textbf{ -25721.90 } & -25625.74 & -25627.65 \\*
      & LB & -25726.98 & -25723.03 & -27196.88 & \textbf{ -25721.90 } & $\dagger$ & $\dagger$ \\*
      & runtime(s) & 172.42 & 229.99 & \textbf{ 1.19 } & 39.63 & 2.18 & 10.73 \\ \hline
   \multirow{3}{*}{ gm\_knott\_3d\_075 } & UB & -30456.95 & -30455.38 & -27294.35 & \textbf{ -30478.37 } & -30429.06 & -30471.30 \\*
      & LB & -30480.69 & -30478.37 & -31854.88 & \textbf{ -30478.37 } & $\dagger$ & $\dagger$ \\*
      & runtime(s) & 26.07 & 46.55 & 2.89 & 14.29 & \textbf{ 2.18 } & 3.57 \\ \hline
   \multirow{3}{*}{ gm\_knott\_3d\_076 } & UB & -27000.93 & -27000.93 & -24789.93 & \textbf{ -27056.99 } & -27004.21 & -27031.94 \\*
      & LB & -27065.92 & -27060.80 & -28550.56 & \textbf{ -27056.99 } & $\dagger$ & $\dagger$ \\*
      & runtime(s) & 1456.89 & 2236.72 & \textbf{ 2.62 } & 50.33 & 3.32 & 10.79 \\ \hline
   \multirow{3}{*}{ gm\_knott\_3d\_077 } & UB & \textbf{ -29482.24 } & \textbf{ -29482.24 } & -27122.85 & -29482.24 & -29476.76 & -29481.33 \\*
      & LB & -29482.55 & -29482.26 & -31159.83 & \textbf{ -29482.24 } & $\dagger$ & $\dagger$ \\*
      & runtime(s) & 170.01 & 155.43 & \textbf{ 2.40 } & 47.23 & 4.12 & 10.59 \\ \hline
   \multirow{3}{*}{ gm\_knott\_3d\_078 } & UB & -20206.78 & -20206.78 & -19374.02 & \textbf{ -20211.55 } & -20189.82 & -20157.27 \\*
      & LB & -20217.62 & -20214.30 & -22015.47 & \textbf{ -20211.55 } & $\dagger$ & $\dagger$ \\*
      & runtime(s) & 350.20 & 1193.95 & \textbf{ 1.56 } & 1451.04 & 2.44 & 10.97 \\ \hline
   \multirow{3}{*}{ gm\_knott\_3d\_079 } & UB & -26601.32 & -26601.32 & -23755.71 & \textbf{ -26607.98 } & -26589.21 & -26593.33 \\*
      & LB & -26612.76 & -26608.39 & -28457.54 & \textbf{ -26607.98 } & $\dagger$ & $\dagger$ \\*
      & runtime(s) & 104.62 & 463.29 & 3.37 & 110.62 & \textbf{ 2.46 } & 6.24 \\ \hline
\multicolumn{ 8 }{| c |}{ \textbf{ knott-3d-450 } } \\* \hline 
      \multirow{3}{*}{ gm\_knott\_3d\_096 } & UB & -89941.58 & -89941.58 & -80280.68 & \textbf{ -89959.41 } & -89779.76 & -89786.75 \\*
      & LB & -89996.73 & -89996.38 & -94633.08 & \textbf{ -89959.41 } & $\dagger$ & $\dagger$ \\*
      & runtime(s) & 3643.00 & 3614.63 & 48.66 & 1438.60 & \textbf{ 16.89 } & 153.09 \\ \hline 
      \multirow{3}{*}{ gm\_knott\_3d\_097 } & UB & -73473.47 & -73473.47 & -67333.36 & \textbf{ -73477.55 } & -73386.50 & -73364.35 \\*
      & LB & -73516.42 & -73516.60 & -78026.62 & \textbf{ -73477.55 } & $\dagger$ & $\dagger$ \\*
      & runtime(s) & 3678.62 & 3674.28 & 26.83 & 1075.22 & \textbf{ 15.59 } & 177.63 \\ \hline 
      \multirow{3}{*}{ gm\_knott\_3d\_098 } & UB & -86499.41 & -86499.41 & -78396.23 & \textbf{ -86593.97 } & -86470.44 & -86498.10 \\*
      & LB & -86633.78 & -86632.24 & -91051.86 & \textbf{ -86593.97 } & $\dagger$ & $\dagger$ \\*
      & runtime(s) & 3648.34 & 3623.29 & \textbf{ 11.39 } & 1507.50 & 15.15 & 69.82 \\ \hline
   \multirow{3}{*}{ gm\_knott\_3d\_099 } & UB & -86177.37 & -86177.37 & -78712.34 & -85956.93 & \textbf{ -86184.26 } & -86180.01 \\*
      & LB & -86320.89 & \textbf{ -86309.52 } & -91020.66 & -86449.40 & $\dagger$ & $\dagger$ \\*
      & runtime(s) & 3634.38 & 3639.38 & 34.96 & 3732.50 & \textbf{ 11.12 } & 90.11 \\ \hline
   \multirow{3}{*}{ gm\_knott\_3d\_100 } & UB & -76590.45 & -76590.45 & -68324.61 & \textbf{ -76699.37 } & -76561.72 & -76523.90 \\*
      & LB & -76761.06 & -76758.81 & -81763.95 & \textbf{ -76699.37 } & $\dagger$ & $\dagger$ \\*
      & runtime(s) & 3628.67 & 3668.23 & 40.53 & 1076.28 & \textbf{ 22.65 } & 141.63 \\ \hline
   \multirow{3}{*}{ gm\_knott\_3d\_101 } & UB & -74508.29 & -74508.29 & -66796.31 & \textbf{ -74529.51 } & -74500.99 & -74495.74 \\*
      & LB & -74544.89 & -74544.24 & -79463.73 & \textbf{ -74529.51 } & $\dagger$ & $\dagger$ \\*
      & runtime(s) & 3616.45 & 3620.81 & 33.62 & 1149.05 & \textbf{ 19.52 } & 110.02 \\ \hline
   \multirow{3}{*}{ gm\_knott\_3d\_102 } & UB & -66423.87 & -66423.87 & -60651.60 & \textbf{ -66482.68 } & -66454.83 & -66455.86 \\*
      & LB & -66525.90 & -66524.32 & -71160.72 & \textbf{ -66482.68 } & $\dagger$ & $\dagger$ \\*
      & runtime(s) & 3680.72 & 3617.03 & 19.75 & 907.64 & \textbf{ 14.44 } & 131.82 \\ \hline
   \multirow{3}{*}{ gm\_knott\_3d\_103 } & UB & \textbf{ -73799.17 } & \textbf{ -73799.17 } & -66427.02 & -73431.17 & -73750.61 & -73743.79 \\*
      & LB & -73918.13 & \textbf{ -73909.00 } & -79062.14 & -73988.21 & $\dagger$ & $\dagger$ \\*
      & runtime(s) & 3662.18 & 3667.64 & 36.71 & 3836.96 & \textbf{ 16.82 } & 79.70 \\ \hline
\multicolumn{ 8 }{| c |}{ \textbf{ knott-3d-550 } } \\* \hline 
  \multirow{3}{*}{ gm\_knott\_3d\_112 } & UB & -152854.11 & -152854.11 & -136448.39 & \textbf{ -153021.45 } & -152908.44 & -152675.73 \\*
      & LB & -153124.66 & -153124.66 & -160981.94 & \textbf{ -153024.89 } & $\dagger$ & $\dagger$ \\*
      & runtime(s) & 3686.25 & 3692.63 & 151.77 & 3716.74 & \textbf{ 88.55 } & 462.34 \\ \hline
   \multirow{3}{*}{ gm\_knott\_3d\_113 } & UB & \textbf{ -135567.04 } & \textbf{ -135567.04 } & -122181.33 & -134820.65 & -135466.46 & -135386.34 \\*
      & LB & \textbf{ -135824.78 } & \textbf{ -135824.78 } & -144181.42 & -135924.12 & $\dagger$ & $\dagger$ \\*
      & runtime(s) & 3657.02 & 3650.21 & 105.96 & 3614.44 & \textbf{ 57.52 } & 540.01 \\ \hline
   \multirow{3}{*}{ gm\_knott\_3d\_114 } & UB & -149545.82 & -149545.82 & -134889.26 & \textbf{ -149716.68 } & -149683.60 & -149526.15 \\*
      & LB & -149831.75 & -149831.31 & -157228.92 & \textbf{ -149722.18 } & $\dagger$ & $\dagger$ \\*
      & runtime(s) & 3651.93 & 3627.15 & 108.75 & 3614.56 & \textbf{ 99.81 } & 384.54 \\ \hline
   \multirow{3}{*}{ gm\_knott\_3d\_115 } & UB & -149769.24 & -149769.24 & -135760.55 & -148726.82 & -149736.65 & \textbf{ -149777.01 } \\*
      & LB & -150320.76 & \textbf{ -150285.41 } & -158348.78 & -150325.94 & $\dagger$ & $\dagger$ \\*
      & runtime(s) & 3647.68 & 3746.48 & 111.32 & 3747.40 & \textbf{ 47.97 } & 384.89 \\ \hline
   \multirow{3}{*}{ gm\_knott\_3d\_116 } & UB & -130577.35 & -130577.35 & -118822.35 & \textbf{ -130757.57 } & -130720.10 & -130580.01 \\*
      & LB & -130910.50 & -130894.84 & -138934.59 & \textbf{ -130761.25 } & $\dagger$ & $\dagger$ \\*
      & runtime(s) & 4152.93 & 3854.59 & \textbf{ 88.56 } & 3688.88 & 92.77 & 976.88 \\ \hline
   \multirow{3}{*}{ gm\_knott\_3d\_117 } & UB & -123419.07 & -123419.07 & -112948.58 & -122646.08 & -123368.23 & \textbf{ -123448.71 } \\*
      & LB & -123849.34 & -123819.48 & -131937.12 & \textbf{ -123810.61 } & $\dagger$ & $\dagger$ \\*
      & runtime(s) & 3860.97 & 3778.57 & 90.49 & 3617.71 & \textbf{ 54.49 } & 582.92 \\ \hline
   \multirow{3}{*}{ gm\_knott\_3d\_118 } & UB & -123467.39 & -123467.39 & -112812.30 & -122526.33 & \textbf{ -123520.61 } & -123483.58 \\*
      & LB & -123720.59 & -123709.54 & -131313.86 & \textbf{ -123538.09 } & $\dagger$ & $\dagger$ \\*
      & runtime(s) & 3720.24 & 3824.32 & \textbf{ 77.63 } & 3777.86 & 83.73 & 869.68 \\ \hline
   \multirow{3}{*}{ gm\_knott\_3d\_119 } & UB & \textbf{ -126318.25 } & \textbf{ -126318.25 } & -116868.98 & -123919.61 & -126308.30 & -126289.59 \\*
      & LB & -126936.67 & -126936.67 & -134702.52 & \textbf{ -126935.77 } & $\dagger$ & $\dagger$ \\*
      & runtime(s) & 3889.64 & 3792.14 & 84.92 & 3688.17 & \textbf{ 58.68 } & 555.56 \\ \hline \pagebreak
\multicolumn{ 8 }{| c |}{ \textbf{ CREMI-small } } \\* \hline
   \multirow{3}{*}{ gm\_small\_1 } & UB & -301663.25 & -301663.25 & -278423.74 & \textbf{ -301674.02 } & \textbf{ 0.00 } & \textbf{ -301674.02 } \\*
      & LB & -301678.18 & -301677.98 & -302379.69 & \textbf{ -301673.92 } & $\dagger$ & $\dagger$ \\*
      & runtime(s) & 3638.89 & 3686.59 & \textbf{ 56.18 } & 998.95 & \textbf{ 0.00 } & 191.08 \\ \hline
   \multirow{3}{*}{ gm\_small\_2 } & UB & \textbf{ -127448.90 } & \textbf{ -127448.90 } & -114182.56 & -116678.57 & -127414.03 & -127292.54 \\*
      & LB & -127545.97 & -127545.60 & -131075.52 & \textbf{ -127520.30 } & $\dagger$ & $\dagger$ \\*
      & runtime(s) & 3623.23 & 3679.25 & \textbf{ 369.68 } & 3722.33 & 3607.92 & 3752.86 \\ \hline
   \multirow{3}{*}{ gm\_small\_3 } & UB & -210390.21 & -210390.21 & -191243.51 & \textbf{ -210430.88 } & -210396.31 & -210386.96 \\*
      & LB & -210452.80 & -210453.44 & -212966.73 & \textbf{ -210432.60 } & $\dagger$ & $\dagger$ \\*
      & runtime(s) & 3674.40 & 3619.29 & \textbf{ 531.16 } & 3606.16 & 3479.31 & 3722.51 \\ \hline
\multicolumn{ 8 }{| c |}{ \textbf{ CREMI-large } } \\* \hline 
   \multirow{3}{*}{ gm\_large\_1 } & UB & \textbf{ -5647807.76 } & \textbf{ -5647807.76 } & \textbf{ 0.00 } & \textbf{ 0.00 } & -5628646.55 & -5524856.64 \\*
      & LB & \textbf{ -5648791.54 } & \textbf{ -5648791.54 } & $\dagger$ & $\dagger$ & 0.00 & 0.00 \\*
      & runtime(s) & 3672.44 & \textbf{ 3629.77 } & \textbf{ 0.00 } & \textbf{ 0.00 } & 3863.35 & 10197.66 \\ \hline
   \multirow{3}{*}{ gm\_large\_2 } & UB & \textbf{ -2368830.08 } & \textbf{ -2368830.08 } & \textbf{ 0.00 } & \textbf{ 0.00 } & -1916548.20 & -1713523.77 \\*
      & LB & \textbf{ -2382103.57 } & \textbf{ -2382103.57 } & $\dagger$ & $\dagger$ & 0.00 & 0.00 \\*
      & runtime(s) & \textbf{ 3613.51 } & 4076.68 & \textbf{ 0.00 } & \textbf{ 0.00 } & 8092.81 & 36081.14 \\ \hline
   \multirow{3}{*}{ gm\_large\_3 } & UB & \textbf{ -3643885.09 } & \textbf{ -3643885.09 } & \textbf{ 0.00 } & \textbf{ 0.00 } & \textbf{ 0.00 } & \textbf{ 0.00 } \\*
      & LB & \textbf{ -3648930.86 } & \textbf{ -3648930.86 } & $\dagger$ & $\dagger$ & 0.00 & 0.00 \\*
      & runtime(s) & 3717.06 & \textbf{ 3712.53 } & \textbf{ 0.00 } & \textbf{ 0.00 } & \textbf{ 0.00 } & \textbf{ 0.00 } \\ \hline
\end{longtable}
%\caption{Per instance table of lower bound/upper bound/runtime in seconds grouped by datasets}
} 

\end{document}